\keywords{constructive mathematics, computability theory, undecidability, semi-unification, mechanization, Coq}
\theoremstyle{definition}
\newtheorem{theorem}[thm]{Theorem}
\newtheorem{definition}[thm]{Definition}
\newtheorem{problem}[thm]{Problem}
\newtheorem{lemma}[thm]{Lemma}
\newtheorem{corollary}[thm]{Corollary}
\newtheorem{remark}[thm]{Remark}
\newtheorem{example}[thm]{Example}
\newcommand{\bbN}{\mathbb{N}}
\newcommand{\bbT}{\mathbb{T}}
\newcommand{\bbS}{\mathbb{S}}
\newcommand{\bbV}{\mathbb{V}}
\newcommand{\calC}{\mathcal{C}}
\newcommand{\calI}{\mathcal{I}}
\newcommand{\MM}{\mathcal{M}} %Minsky machines
\newcommand{\CM}{\mathcal{P}} %one counter machines
\newcommand{\SM}{\mathcal{S}} %stack machines
\newcommand{\inc}[1]{{\texttt{inc}_{#1}}}
\newcommand{\dec}[2]{{\texttt{dec}_{#1}\,{#2}}}
\newcommand{\config}[3]{{#1}{\shortmid}{#2}{\shortmid}{#3}}
\begin{document}

\title[From Halting to Semi-unification]{Constructive Many-one Reduction from the Halting Problem to Semi-unification (Extended Version\rsuper*)}
\titlecomment{{\lsuper*}The present work is an extended version of a prior conference paper~\cite{csl/Dudenhefner22}.}
\thanks{The author is grateful for encouragement and assistance by {Pawe\l} Urzyczyn and
the members of the programming systems lab led by Gert Smolka at Saarland University.
Additionally, the author thanks the anonymous reviewers for insightful remarks on the manuscript and on the intertwined history of semi-unification and polymorphic type inference.}	%optional

\author[A.~Dudenhefner]{Andrej Dudenhefner\lmcsorcid{0000-0003-1104-444X}}

\address{TU Dortmund University, Dortmund, Germany}
\email{andrej.dudenhefner@cs.tu-dortmund.de}

\begin{abstract}
Semi-unification is the combination of first-order unification and first-order matching.
The undecidability of semi-unification has been proven by Kfoury, Tiuryn, and Urzyczyn in the 1990s by Turing reduction from Turing machine immortality (existence of a diverging configuration).
The particular Turing reduction is intricate, uses non-computational principles, and involves various intermediate models of computation.

The present work gives a constructive many-one reduction from the Turing machine halting problem to semi-unification.
This establishes \textsf{RE}-completeness of semi-unification under many-one reductions.
Computability of the reduction function, constructivity of the argument, and correctness of the argument is witnessed by an axiom-free mechanization in the Coq proof assistant.
Arguably, this serves as comprehensive, precise, and surveyable evidence for the result at hand.
The mechanization is incorporated into the existing, well-maintained Coq library of undecidability proofs.
Notably, a variant of Hooper's argument for the undecidability of Turing machine immortality is part of the mechanization.
\end{abstract}

\maketitle

\section{Introduction} 
\label{sec:intro}
\emph{Semi-unification} is the following decision problem: given a finite set of pairs of first-order terms, is there a~substitution $\varphi$ such that for each pair~$(\sigma, \tau)$ in the set of first-order terms we have $\psi(\varphi(\sigma)) = \varphi(\tau)$ for some substitution $\psi$?

Semi-unification was formulated independently by Henglein~\cite{Henglein88}, by Leiß~\cite{leiss1989semi}, and by Kapur, Musser, Narendran and Stillman~\cite{KapurMNS88}.
Both Henglein~\cite{Henglein88} and Leiß~\cite{leiss1989semi} show that semi-unification is equivalent with type inference for functional programming languages with recursive polymorphism\footnote{Mycroft's original type inference algorithm for ML with recursive polymorphism predates the conception of semi-unification and is based on Algorithm~W by Damas and Milner~\cite[Section~6]{DamasM82}.}~\cite{Mycroft84} (for a~detailed account see~\cite{leiss1989polymorphic,Henglein93,KfouryTU93}).
Intuitively, the substitution~$\varphi$ establishes global code invariants (cf.\ first-order unification), and the individual substitutions~$\psi$ establish additional local conditions for each polymorphic function application (cf.\ first-order matching).

While both first-order unification and first-order matching are decidable problems, the status of semi-unification remained open, until answered negatively by Kfoury, Tiuryn, and Urzyczyn~\cite{KfouryTU90,KTU93SemiU}.
The undecidability of semi-unification impacted programming language design and analysis with respect to polymorphic recursion~\cite{LeissH91,jahama1993general}, loop detection~\cite{Purdom87}, and data flow~\cite{FahndrichRD00}.
Another prominent result based on the undecidability of semi-unification is the undecidability of System~F~\cite{girard1972SysF,Reynolds74} typability and type checking~\cite{Wells99,Dud21}.
Of course, the negative result motivated the complementary line of work~\cite{LeissH91} in search for expressive, decidable fragments of semi-unification.
A notable decidable fragment is \emph{acyclic} semi-unification~\cite{MLKfouryTU94}, used for standard ML typability\footnote{\textsf{EXPTIME}-completeness of standard ML typability is shown by Kannellakis and Mitchell~\cite{KanellakisM89} (upper bound) and by Mairson, Kfoury, Tiuryn, and Urzyczyn~\cite{Mairson90, Kfoury90} (lower-bound).}.

Due to the importance of semi-unification in functional programming, it is natural to ask for \emph{surveyable} evidence (both locally and globally in the sense of~\cite{Bassler06}) for its undecidability.
The original undecidability proof~\cite{KfouryTU90} is quite sophisticated, and it was simplified~\cite{Dudenhefner20-SU} and partially mechanized in the Coq proof assistant.
Another peculiarity of semi-unification is that prior to the negative result, it was erroneously claimed to be \emph{decidable}~\cite[U-match~Algorithm]{Purdom87},~\cite[Lemma~5.1]{KfouryTU88},~\cite[{Algorithm~A-1}]{KapurMNS88}.
This serves as further evidence for the intricate nature of semi-unification and calls for surveyable evidence.
Unfortunately, there are several aspects that obstruct surveyability of previous work.

First, existing arguments rely on the undecidability of Turing machine immortality, shown by Hooper~\cite{Hooper66}.
The corresponding construction has received more attention~\cite{KariO08}, however, it was never published in full detail. Hooper remarks:
\begin{displayquote}
\textit{A routine and unimaginative analysis-of-cases proof would point this out more clearly; but it has remained unwritten since, as a~rather tedious insult to the alert, qualified reader, it would surely remain unread.}
\end{displayquote}
While the omissions are justified by accessibility, they hinder verification in full detail.
The existing mechanization~\cite{Dudenhefner20-SU} of the undecidability of semi-unification does not improve upon this aspect, as it assumes the undecidability of Turing machine immortality as an axiom.

Second, existing arguments use non-constructive principles such as excluded middle, König's lemma~\cite{KfouryTU90}, or the fan theorem~\cite{Dudenhefner20-SU} that do not support constructivity of the arguments.
As a~result, anti-classical theories, such as synthetic computability theory~\cite{Bauer06}, may be in conflict with such constructions.
The question arises, whether non-constructive principles are inherent to semi-unification or could be avoided.

Third, existing arguments use Turing reductions and are insufficient to establish \textsf{RE}-completeness of semi-unification under many-one reductions. 
Hitherto, a~many-one reduction from Turing machine halting to semi-unification is not given.

This work improves upon the above aspects as follows.
It provides a~comprehensive chain of many-one reductions from Turing machine halting to semi-unification, replacing immortality with uniform boundedness.
Crucially, each many-one reduction is mechanized in full detail using the Coq proof assistant~\cite{Coq}.
The mechanization witnesses correctness and constructivity of the argument.
Specifically, the notion of a~\emph{constructive} proof is identified with an axiom-free Coq mechanization (cf.\ calculus of inductive constructions).
It neither assumes functional extensionality\footnote{Two functions are equal if their values are equal at every argument (cf.\ homotopy type theory).}, Markov's principle\footnote{If it is impossible that an algorithm does not terminate, then it will terminate (cf.\ Russian constructivism).}, nor the fan theorem\footnote{The topological space of unbounded binary sequences is compact (cf.\ Brouwer's intuitionism).}.
Finally, the mechanization is integrated into the Coq Library of Undecidability Proofs~\cite{CLUP20}, and contributes a~(first of its kind) mechanized variant of Hooper's immortality construction~\cite{Hooper66}.

The described improvements allow for an alternative approach to show many-one equivalence of System F typability and System F type checking, compared to the argument established in the 1990s by Wells~\cite{Wells99}.
The original argument interreduces type checking and typability directly, which requires a~technically sophisticated argument.
Having a~constructive many-one reduction from Turing machine halting to semi-unification at our disposal (together with recursive enumerability of System F typability and type checking), it suffices (and is simpler) to reduce semi-unification to type checking and typability individually~\cite{Dud21}.

\subsection*{Synopsis} The reduction from Turing machine halting to semi-unification is divided into several reduction steps. Each reduction step is many-one and constructive.
That is, a~predicate $P$ over the domain $X$ \emph{constructively many-one} reduces to a~predicate $Q$ over the domain $Y$, if there exists a~computable function $f : X \to Y$ such that for all $x \in X$ we constructively have $P(x) \iff Q(f(x))$.
Additionally, each reduction step is mechanized as part of the Coq Library of Undecidability Proofs~\cite{CLUP20}.

\begin{description}
\item[Section~\ref{sec:MM}] Turing machine halting is reduced to two-counter machine halting (Lemma~\ref{lem:TM_to_CM2}) using Minsky's argument~\cite[Section 14.1]{minsky1967computation}.
This step simplifies the starting machine model.
\item[Section~\ref{sec:CM}] Two-counter machine halting is reduced to one-counter machine $1$-halting (Lemma~\ref{lem:mm-to-cm}), adapting Minsky's observation~\cite[Section 14.2]{minsky1967computation}.
This step prepares the machine model for nested simulation via two-stack machines without symbol search.
\item[Section~\ref{subsec:dlsm}] One-counter machine $1$-halting is reduced to deterministic, length-preserving two-stack machine uniform boundedness (Lemma~\ref{lem:cm1-dlsm}), adapting Hooper's construction for Turing machine immortality~\cite{Hooper66}.
This step transitions the machine problem from halting to uniform boundedness.
\item[Section~\ref{subsec:cssm}] Deterministic, length-preserving two-stack machine uniform boundedness is reduced to  confluent, simple two-stack machine uniform boundedness (Lemma~\ref{lem:dlsm-cssm}), simplifying machine structure.
This step enables reuse of the existing mechanized reduction from a~uniform boundedness problem to semi-unification~\cite{Dudenhefner20-SU}.
\item[Section~\ref{subsec:ssu}] Confluent, simple two-stack machine uniform boundedness is reduced to simple semi-unification (Lemma~\ref{lem:cssm-ssu}), strengthening previous work~\cite{Dudenhefner20-SU}.
This step transitions to an undecidable fragment of semi-unification.
\item[Section~\ref{subsec:r2su}] Simple semi-unification is reduced to right-uniform, two-inequality semi-unifi\-cation (Lemma~\ref{lem:ssu-su2}), establishing the main result (Theorem~\ref{thm:tm-semiu}).
\item[Section~\ref{sec:mec}] Outline of the mechanization of the above reduction steps.
\item[Section~\ref{sec:concl}] Concluding remarks.
\end{description}

\newpage

\section{Two-counter Machines}
\label{sec:MM}
The key insight of recent work~\cite{Dudenhefner20-SU} establishes a~direct correspondence between semi-unification and a~uniform boundedness problem for a~machine model.
In order to reduce Turing machine halting to such a~problem, in this section we consider \emph{two-counter machines} as a~well-understood, mechanized~\cite{ForsterMM2}, and more convenient intermediate model of computation.

Two-counter machines, pioneered by Minsky~\cite[Section 14.1]{minsky1967computation}, constitute a~particularly simple, Turing-complete model of computation.
A two-counter machine (Definition~\ref{def:MM}) stores data in two \emph{counters}, each containing a~natural number.
A program instruction may either increment or decrement a~counter value, and modify the current program index.
The size of a~two-counter machine is the length, denoted $|\cdot|$, of the list, denoted $[ \ldots ]$, of its instructions.
Individual instructions in the list are indexed starting from index $0$.

\begin{definition}[Two-counter Machine ($\MM$)]
\label{def:MM}
A \emph{two-counter machine} $\MM$ is a~list of \emph{instructions} of shape either $\inc{0}$, $\inc{1}$, $\dec{0}{j}$, or $\dec{1}{j}$, where $j \in \bbN$ is a~\emph{program index}.

A \emph{configuration} of $\MM$ is of shape $(i, (a,b))$, where $i \in \bbN$ is the current \emph{program index} and $a,b \in \bbN$ are the current \emph{counter values}.

The \emph{step relation} of $\MM$ on configurations, written ($\longrightarrow_{\MM}$), is given by
\begin{itemize}
\item if $\inc{0}$ is at index $i$ in $\MM$, then $(i, (a,b)) \longrightarrow_{\MM} (i+1, (a+1,b))$;
\item if $\inc{1}$ is at index $i$ in $\MM$, then $(i, (a,b)) \longrightarrow_{\MM} (i+1, (a,b+1))$;
\item if $\dec{0}{j}$ is at index $i$ in $\MM$, then $(i, (0,b)) \longrightarrow_{\MM} (i+1, (0,b))$\\
and $(i, (a+1,b)) \longrightarrow_{\MM} (j, (a,b))$;
\item if $\dec{1}{j}$ is at index $i$ in $\MM$, then $(i, (a,0)) \longrightarrow_{\MM} (i+1, (a,0))$\\
and $(i, (a,b+1)) \longrightarrow_{\MM} (j, (a,b))$;
\item otherwise, we say that the configuration $(i, (a,b))$ is \emph{halting}.
\end{itemize}

The \emph{reachability relation} of $\MM$ on configurations, written ($\longrightarrow_{\MM}^*$), is the reflexive, transitive closure of ($\longrightarrow_{\MM}$).

A configuration $(i,(a,b))$ \emph{terminates} in $\MM$, if we have
$(i, (a,b)) \longrightarrow_{\MM}^* (i', (a', b'))$ for some halting configuration $(i', (a', b'))$.
\end{definition}

\begin{remark}
\label{rem:cm2jnz}
The above Definition~\ref{def:MM} is proposed by Forster and Larchey-Wendling~\cite{ForsterMM2} as a~slight variation of the definition given by Minsky~\cite[Table 11.1-1]{minsky1967computation}.
In particular, the conditional jump on decrement is on a~strictly positive counter value (and not on zero).
Intuitively, conditional jumps on zero (as in Minsky's original definition) severely restrict meaningful control flow when only two counters are available.
In fact, in the setting with only two counters the halting problem for the original definition is \emph{decidable}~\cite[Remark 1]{fscd/Dudenhefner22}, and for universality requires additional control flow instructions.
\end{remark}

Despite its remarkable simplicity, the halting problem for two-counter machines (Problem~\ref{prb:cm2halt}) is undecidable (Corollary~\ref{cor:CM2_undec}).

\begin{problem}[Two-counter Machine Halting]
\label{prb:cm2halt}
Given a~two-counter machine $\MM$ and two natural numbers $a, b \in \bbN$, does the configuration $(0, (a, b))$ terminate in $\MM$?
\end{problem}

\begin{lemma}
\label{lem:TM_to_CM2}
The Turing machine halting problem many-one reduces to the two-counter machine halting problem (Problem~\ref{prb:cm2halt}).
\end{lemma}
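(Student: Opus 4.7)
The plan is to assemble the reduction from Minsky's classical simulation~\cite[Section 14.1]{minsky1967computation}, composed in two stages, while paying attention to two aspects that are often glossed over: many-oneness (as opposed to Turing reducibility) and compatibility with the decrement-on-positive variant of Definition~\ref{def:MM} (see Remark~\ref{rem:cm2jnz}). Concretely, given a Turing machine $M$ and an input $w$, I would first construct a multi-counter machine $\MM_k$ (with $k = 3$ or $k = 4$ auxiliary counters) that simulates $M$ on $w$ step-by-step: the tape contents to the left and right of the head are encoded as two natural numbers in a positional base $b > |\Sigma|$, held in two counters, and the remaining counters serve as scratch space for the multiply/divide-by-$b$ loops that realize the push/pop operations of the two ``tape stacks''. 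The input $w$ is hard-wired into the initial counter values via its base-$b$ representation, so the reduction function maps $(M, w)$ to a single $(\MM_k, \vec{c}_w)$ in a purely syntactic, computable way.

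The second stage is Minsky's Gödel-encoding step: reduce $k$-counter machine halting to two-counter machine halting by packing $(c_1, \ldots, c_k)$ into the single natural number $p_1^{c_1} \cdots p_k^{c_k}$, where $p_1, \ldots, p_k$ are the first $k$ primes. The first counter of the constructed $\MM$ stores this packed value, while the second counter is used as scratch to multiply or divide by $p_i$ in order to simulate incrementing or decrementing the $i$-th encoded counter, and a zero-test on the $i$-th counter becomes a divisibility-by-$p_i$ test. The initial packed value $p_1^{c_{w,1}} \cdots p_k^{c_{w,k}}$ is again computable from $w$, and halting is preserved step-for-step, giving a genuinely many-one reduction.

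For the actual mechanization, the cleanest route is to compose the reduction with the already-mechanized Turing-machine-to-two-counter-machine reduction of Forster and Larchey-Wendling~\cite{ForsterMM2}, so the proof obligation of Lemma~\ref{lem:TM_to_CM2} collapses to invoking that library entry. The main obstacle, and the reason a fresh derivation would be tedious rather than deep, is the decrement-on-positive semantics of Definition~\ref{def:MM}: in Minsky's original formulation the conditional jump fires on zero, and each simulated increment/decrement must be rewritten into a small gadget of instructions whose control flow respects the flipped convention and whose fall-through versus jump targets are set up correctly. The remaining bookkeeping, proving the two simulation invariants (``after a simulated macro-step, the packed counter correctly encodes the $k$-tuple'' and ``the original machine halts iff the constructed two-counter machine halts''), is routine induction on the step relation, and the forward and backward directions of the iff follow by the standard forward/backward simulation arguments.
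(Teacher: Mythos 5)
Your proposal matches the paper's proof, which is likewise a two-sentence sketch of Minsky's two-stage simulation (Turing machine to a four-counter machine via G\"odel/positional encodings of the tape halves, then four counters packed into one number via prime powers with the second counter as scratch), and the paper's accompanying remark makes exactly the point you do: the mechanization instead invokes the Forster--Larchey-Wendling chain to sidestep the instruction-set ambiguity of Remark~\ref{rem:cm2jnz}. No gaps; your treatment is if anything more explicit about many-oneness and the decrement-on-positive convention than the paper's own sketch.
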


\begin{proof}[Proof Sketch]
Minsky describes the simulation of Turing machines by machines with four counters~\cite[Section 11.2]{minsky1967computation} working on Gödel encodings.
Then, machines with four counters are simulated by two-counter machines~\cite[Theorem 14.1-1]{minsky1967computation}.
\end{proof}

\begin{corollary}
\label{cor:CM2_undec}
Two-counter machine halting (Problem~\ref{prb:cm2halt}) is undecidable.
\end{corollary}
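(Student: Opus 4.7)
The plan is essentially a one-line transport of undecidability along a many-one reduction. Lemma~\ref{lem:TM_to_CM2} already delivers a many-one reduction from the Turing machine halting problem to Problem~\ref{prb:cm2halt}. Combined with the (classical and well-mechanized) fact that Turing machine halting is itself undecidable, the corollary follows from the standard preservation principle: if $P$ many-one reduces to $Q$ and $P$ is undecidable, then so is $Q$.

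Concretely, I would proceed as follows. First, I would invoke undecidability of Turing machine halting as a known result (indeed, in the Coq Library of Undecidability Proofs it is the canonical seed of undecidability, so no extra mechanization is needed here). Second, I would apply Lemma~\ref{lem:TM_to_CM2} to obtain the computable reduction function $f$ from Turing machine instances to two-counter machine instances such that a Turing machine halts on its input iff $(0,(a,b))$ terminates in $f$ of it. Third, I would argue by contradiction at the meta-level: any decider for Problem~\ref{prb:cm2halt} composed with $f$ would yield a decider for Turing machine halting, contradicting the previous step.

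Since the statement is a pure corollary of Lemma~\ref{lem:TM_to_CM2} and no new combinatorics or machine simulations need to be carried out, there is no real obstacle — the only care required is to stay within the constructive setting of the paper, which is automatic because many-one reductions compose constructively and undecidability is formulated here as the absence of a Coq-definable decider. In the mechanization this would amount to a two-line lemma chaining \texttt{undecidability\_HaltTM} with the reduction produced by Lemma~\ref{lem:TM_to_CM2}.
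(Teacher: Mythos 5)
Your proposal matches the paper's (implicit) argument exactly: the corollary is stated without separate proof precisely because it follows from Lemma~\ref{lem:TM_to_CM2} by the standard transport of undecidability along a many-one reduction from Turing machine halting. The only cosmetic remark is that in the constructive setting one phrases this as ``a decider for Problem~\ref{prb:cm2halt} composed with $f$ yields a decider for Turing machine halting'' rather than as a proof by contradiction, but this is exactly what your third step does.
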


\begin{remark}
Minsky's original argument is constructive and, despite some confusion (Remark~\ref{rem:cm2jnz}), is sufficient for our technical result.
However, to avoid ambiguity with respect to the instruction set used, for mechanization we rely on existing work by Forster and Larchey-Wendling~\cite{ForsterMM2}.
This approach emerged as part of the effort to mechanize Hilbert's tenth problem~\cite{H10Coq}.
Specifically, it many-one reduces Turing machine halting via the Post correspondence problem~\cite{post1946PCP,ForsterPCP} to $n$-counter machine halting, and ultimately to two-counter machine halting.
\end{remark}

\begin{remark}
Besides Minsky's original definition, there are several different universal instruction sets for two-counter machines~\cite[Section~2]{Korec96}.
The particular choice in the present work is motivated by the following two facts.
First, the instruction set is sufficiently restricted\footnote{One can combine the $\inc{0}$ and the $\inc{1}$ instructions to one $\inc{\{0,1\}}$ instruction. However, this is insignificant in practice.}.
This allows for shorter proofs based on reductions \emph{from} properties of the particular machine model.
Second, the existing mechanization for the undecidability of the corresponding halting problem is well-maintained, and is readily available in the Coq Library of Undecidability Proofs.
\end{remark}

Commonly, two-counter machines are easily simulated by other machine models, and therefore serve a~key role in undecidability proofs for machine immortality and boundedness problems~\cite{Hooper66,KariO08}.
However, they have one drawback with respect to boundedness properties.
Specifically, there is no natural increasing measure on configurations along the step relation, as it allows for cycles (Example~\ref{xmp:mm-loop}).

\begin{example}
\label{xmp:mm-loop}
Consider the two-counter machine $\MM = [\inc{0}, \dec{0}{0}]$.
For any counter values $a, b \in \bbN$ the configuration~$(0,(a,b))$ does not terminate in $\MM$ because of the cycle
\[(0,(a,b)) \longrightarrow_{\MM} (1,(a+1,b)) \longrightarrow_{\MM} (0,(a,b)) \longrightarrow_{\MM} \ldots\]
Still, the configuration~$(0,(a,b))$ is \emph{bounded}, because from it only a~finite number of distinct configurations is reachable.
In fact, from any configuration at most $2$ distinct configurations are reachable in $\MM$.
\end{example}

One could eliminate cycles by introducing a~third counter, which increases at every step.
This establishes an equivalence between mortality and corresponding boundedness properties.
While inducing a~natural increasing measure (value of the third counter), it incurs additional bookkeeping.
A simulation of such a~three-counter machine by an acyclic two-counter machine is possible~\cite{KariO08}, however, it again obscures the underlying measure.

We address this drawback of two-counter machines with respect to boundedness properties in the following Section~\ref{sec:CM}, building upon Minsky's notion of machines with \emph{one} counter.

\newpage

\section{One-counter Machines}
\label{sec:CM}
As Minsky originally observed~\cite[Section~14.2]{minsky1967computation}, with multiplication and division by constants the halting problem is undecidable for \emph{one-counter machines}.
Specifically, increase and decrease operations for two values $a$ and $b$ can be simulated by multiplication and division respectively by $2$ and $3$ respectively for one value~$2^a 3^b$.

In this section, we further develop Minsky's construction (similarly to~\cite{Wojna99}) of universal machines with one counter in pursuit of two goals.
First, machine runs should be easy to simulate in the stack machine model (Remark~\ref{rem:cm-sim}) used in Section~\ref{sec:sm}.
Second, we require a~measure on machine configurations that increases along the step relation, directly connecting non-termination and unboundedness (Lemma~\ref{lem:cm-ter-bnd}).
Both goals are motivated by the effort necessary to mechanize the overarching result\footnote{
The author firmly believes that proof surveyability does improve with a~simpler mechanization.}.

A program instruction of a~one-counter machine (Definition~\ref{def:CM}), besides modifying the program index, conditionally multiplies the current counter value with either $\frac{2}{1}, \frac{3}{2}, \frac{4}{3}$, or $\frac{5}{4}$.
Notably, such a~multiplication by $\frac{d+1}{d}$ for $d \in \{1, 2, 3, 4\}$ is both easy to simulate uniformly, and strictly increases a~positive counter value.

\begin{definition}[One-counter Machine ($\CM$)]
\label{def:CM}
A \emph{one-counter machine} $\CM$ is a~list of \emph{instructions} of shape $(j, d)$, where $j \in \bbN$ is a~\emph{program index} and $d \in \{1,2,3,4\}$ is a~\emph{counter modifier}.

A \emph{configuration} of $\CM$ is a~pair $(i, c)$, where $i \in \bbN$ is the current \emph{program index} and $c \in \bbN$ such that $c > 0$ is the current \emph{counter value}.

The \emph{step relation} of $\CM$ on configurations, written ($\longrightarrow_{\CM}$), is given by
\begin{itemize}
\item if $|\CM| \leq i$, then $(i, c) \longrightarrow_{\CM} (i, c)$ and we say $(i, c)$ is \emph{halting};
\item if $(j, d)$ is at index $i$ in $\CM$ and $d$ divides $c$,
then $(i, c) \longrightarrow_{\CM} (j, c \cdot \frac{d+1}{d})$;
\item if $(j, d)$ is at index $i$ in $\CM$ and $d$ does not divide~$c$,
then $(i, c) \longrightarrow_{\CM} (i+1, c)$.
\end{itemize}

The \emph{$n$-fold step relation} is denoted $\longrightarrow_{\CM}^n$.
The \emph{reachability relation} of $\CM$ on configurations, written ($\longrightarrow_{\CM}^*$), is the reflexive, transitive closure of ($\longrightarrow_{\CM}$).

A configuration $(i,c)$ \emph{terminates} in $\CM$, if we have $(i, c) \longrightarrow_{\CM}^* (i', c')$ for some halting configuration~$(i', c')$.
\end{definition}

Intuitively, the fractions $\frac{2}{1}, \frac{3}{2}, \frac{4}{3}, \frac{5}{4}$ serve the following goals.
The fractions $\frac{2}{1}$ and $\frac{3}{2}$ multiply by $2$ and $3$ respectively.
The fractions $\frac{3}{2}$ and $\frac{4}{3}$ divide by $2$ and $3$ respectively.
The fraction $\frac{5}{4}$ removes the superfluous factor $4$ introduced by the fraction $\frac{4}{3}$.

\begin{example}
Consider the one-counter machine $\CM=[(1,1), (0,2)]$. The configuration~$(0,1)$ does not terminate in $\CM$ because of the infinite configuration chain
\[(0,1) \longrightarrow_{\CM} (1,1 \cdot \frac{2}{1}) \longrightarrow_{\CM} (0,2 \cdot \frac{3}{2})  \longrightarrow_{\CM} (1,3 \cdot \frac{2}{1})  \longrightarrow_{\CM} (0,6 \cdot \frac{3}{2})  \longrightarrow_{\CM} (1,9 \cdot \frac{2}{1})  \longrightarrow_{\CM} \cdots \]
Naturally, the counter value strictly increases in the above configuration chain.
\end{example}

\begin{remark}
One-counter machines can be understood as a~variant of Conway's \textsc{FRACTRAN} language~\cite{conway1987fractran} with a~relaxed program index transition rule, and restricted to the instruction set ($\frac{2}{1}, \frac{3}{2}, \frac{4}{3}, \frac{5}{4}$).
\end{remark}

The step relation for one-counter machines is total (Lemma~\ref{fac:cm1step}(\ref{fac:cm1steptot})), deterministic (Lemma~\ref{fac:cm1step}(\ref{fac:cm1stepfunct})), and forms increasing chains (Lemma~\ref{fac:cm1step}(\ref{fac:cm1stepfchain}) and Lemma~\ref{fac:cm1step}(\ref{fac:cm1stepfcinc})) up to a~halting configuration, which is necessarily a~trivial configuration loop (Lemma~\ref{fac:cm1step}(\ref{fac:cm1loop})).

\newpage

\begin{lemma}[One-counter Machine Step Relation Properties]~
\label{fac:cm1step}
\begin{enumerate}
\item \label{fac:cm1steptot}
\textbf{Totality:}\\
For all configurations $(i, c)$ there is a~configuration $(i', c')$ such that $(i, c) \longrightarrow_{\CM} (i', c')$.
\item \label{fac:cm1stepfunct}
\textbf{Determinism:}\\
If $(i, c) \longrightarrow_{\CM} (i', c')$ and $(i, c) \longrightarrow_{\CM} (i'', c'')$, then $(i', c') = (i'', c'')$.
\item \label{fac:cm1loop}
\textbf{Halting Property:}\\
A configuration $(i, c)$ is halting iff $(i, c) \longrightarrow_{\CM} (i, c)$.
\item \label{fac:cm1stepfchain}
\textbf{Increasing Measure:}\\
If $(i, c) \longrightarrow_{\CM} (i', c')$ and $(i, c)$ is not halting, then
$|\CM| \cdot c + i < |\CM| \cdot c' + i'$.
\item \label{fac:cm1stepfcinc}
\textbf{Monotone Counter:}
\begin{enumerate}
\item If $(i, c) \longrightarrow_{\CM} (i', c')$, then $c \leq c'$.
\item If $(i, c) \longrightarrow_{\CM}^{|\CM|+1} (i', c')$ and $(i', c')$ is not halting,
then $c < c'$.
\end{enumerate}
\end{enumerate}
\end{lemma}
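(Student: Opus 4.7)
The overall plan is routine case analysis on the instruction at the current program index, exploiting the fact that the three defining clauses of the step relation partition configurations by decidable side-conditions ($|\CM| \leq i$; otherwise $d \mid c$ or $d \nmid c$ for the unique $(j,d)$ at index $i$). I would establish the five items in the natural order (1), (2), (3), (4), (5a), (5b); parts (1) and (2) are immediate from exhaustiveness and mutual exclusion of these cases, and part (5a) falls out directly: in the counter-modifying case $c' = c \cdot \tfrac{d+1}{d} \geq c$ since $d \geq 1$, and otherwise $c' = c$.

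For part (3), the halting property, the left-to-right direction is by definition of the self-loop clause. For the converse, assume $(i,c) \longrightarrow_{\CM} (i,c)$ and that $(i,c)$ is not halting, so an instruction $(j,d)$ sits at position $i$. The index-incrementing branch gives $(i,c) \longrightarrow_{\CM} (i+1,c) \neq (i,c)$, and the counter-modifying branch would force $c = c \cdot \tfrac{d+1}{d}$, i.e., $c \cdot d = c \cdot (d+1)$, contradicting $c > 0$. Hence the configuration must be halting.

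For part (4), in the same setting $(j,d)$ is at index $i$ with $i < |\CM|$. If $d \mid c$, then $c' = c + c/d \geq c + 1$ because $c \geq d$, hence
\[ |\CM| \cdot c' + i' \;\geq\; |\CM|(c+1) \;=\; |\CM|\cdot c + |\CM| \;>\; |\CM|\cdot c + i. \]
If $d \nmid c$, then $i' = i+1$ and $c' = c$, which gives the measure increase by exactly $1$. Both branches yield the strict inequality.

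Part (5b) is the most interesting bit, though still elementary. Suppose for contradiction that $(i,c) \longrightarrow_{\CM}^{|\CM|+1} (i',c')$ with $(i',c')$ non-halting but $c' = c$. By (5a) the counter values along the chain are monotone, so they must all equal $c$. But if a step preserves the counter and originates at a non-halting configuration, then by the case analysis above it must be an index-incrementing step, i.e., $i \mapsto i+1$. Iterating this over all $|\CM|+1$ steps yields a final program index of at least $i + |\CM|+1 > |\CM|$, so $(i',c')$ would be halting by (3), contradicting our assumption. I expect this monotonicity-plus-pigeonhole argument to be the main obstacle, since it requires chaining together (5a), (3), and the structural observation that counter-preserving non-halting steps strictly increment the program index; the remaining parts are single-case calculations.
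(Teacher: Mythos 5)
Your proposal is correct and matches the paper's proof, which is given only as ``routine case analysis'' -- your write-up simply spells out the cases. The one point worth making explicit in part (5b) is that no intermediate configuration of the chain can be halting (otherwise, by determinism and the self-loop property, the final configuration would be that same halting configuration), which licenses applying the ``counter-preserving non-halting steps increment the index'' observation at every step.
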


\begin{proof}
Routine case analysis.
\end{proof}

The above Lemma~\ref{fac:cm1step}(\ref{fac:cm1stepfchain}) gives a~strictly increasing measure $|\CM| \cdot c + i$ on non-halting configurations~$(i, c)$ with respect to the step relation $(\longrightarrow_{\CM})$.
Therefore, any configuration cycle is trivial, i.e.\ the corresponding configuration is halting.
Additionally, by Lemma~\ref{fac:cm1step}(\ref{fac:cm1stepfcinc}), the counter value is guaranteed to increase after $|\CM|+1$ steps, unless a~halting configuration is reached.
This results in a~characterization of termination via boundedness of reachable counter values (Lemma~\ref{lem:cm-ter-bnd}).

\begin{lemma}\label{lem:cm-ter-bnd}
Let $\CM$ be a~one-counter machine.
A configuration $(i,c)$ terminates in $\CM$ iff there is a~$k \in \bbN$ such that for all configurations $(i', c')$ with $(i, c) \longrightarrow_{\CM}^* (i', c')$ we have $c' < k$.
\end{lemma}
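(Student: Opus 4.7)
The plan is to prove each direction of the biconditional separately, with the forward direction using determinism and the finiteness of the unique computation prefix, and the backward direction using the monotone counter property as a contrapositive.

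For the forward direction, suppose $(i,c)$ terminates, so $(i,c) \longrightarrow_{\CM}^* (i_n, c_n)$ for some halting $(i_n, c_n)$. By totality and determinism (Lemma~\ref{fac:cm1step}(\ref{fac:cm1steptot}) and (\ref{fac:cm1stepfunct})), this witness is obtained by a unique finite chain $(i,c) = (i_0,c_0) \longrightarrow_{\CM} (i_1,c_1) \longrightarrow_{\CM} \cdots \longrightarrow_{\CM} (i_n,c_n)$. By the halting property (Lemma~\ref{fac:cm1step}(\ref{fac:cm1loop})), all subsequent steps loop on $(i_n,c_n)$, so the set of configurations reachable from $(i,c)$ is exactly $\{(i_0,c_0),\dots,(i_n,c_n)\}$. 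Taking $k := 1 + \max\{c_0,\dots,c_n\}$ yields the required bound. A minor technicality is that determinism only gives us that each configuration has a unique successor; to formalize the claim that reachable configurations lie on the unique chain prefix, I would do induction on the length of the reachability witness $(i,c) \longrightarrow_{\CM}^* (i',c')$, using determinism at each step to keep the reached configuration on the chain.

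For the backward direction, I would argue by contraposition: assume $(i,c)$ does not terminate and show that reachable counter values are unbounded. By totality and determinism, there is an infinite chain $(i,c) = (i_0,c_0) \longrightarrow_{\CM} (i_1,c_1) \longrightarrow_{\CM} \cdots$; non-termination guarantees that none of the $(i_m, c_m)$ is halting (else the chain's prefix would be a termination witness). Fix any $k \in \bbN$. Applying the monotone counter property (Lemma~\ref{fac:cm1step}(\ref{fac:cm1stepfcinc})(b)) repeatedly, the subsequence $c_0, c_{|\CM|+1}, c_{2(|\CM|+1)}, \ldots$ is strictly increasing in $\bbN$, so some $c_{m(|\CM|+1)} \geq k$, witnessing a reachable configuration violating the bound.

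The main subtlety I expect is constructive: to get a single $k$ from termination one must compute the maximum along the finite chain, which is routine once the chain is extracted by induction on the reachability witness. On the other direction, the contrapositive form of \enquote{not bounded} needs care in a constructive setting; since we are given the universal statement $\forall k.\, \exists (i',c').\, (i,c)\longrightarrow_{\CM}^*(i',c')\land c'\geq k$ on the hypothesis side and we construct the infinite chain directly (via totality and determinism as a function on $\bbN$), no nonconstructive choice principle is needed. Everything else reduces to arithmetic on the bound $|\CM|\cdot c + i$ already established.
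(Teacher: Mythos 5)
Your forward direction is correct and essentially matches the paper's, up to the choice of bound: you take $1+\max\{c_0,\dots,c_n\}$ along the (unique, by determinism) finite chain, whereas the paper uses the explicit bound $k = 1 + c\cdot 2^n$, exploiting that every instruction multiplies the counter by at most $2$; both variants need the same induction on the reachability witness to show that every reachable configuration lies on the chain prefix.

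The backward direction, however, has a genuine gap in the constructive setting this paper insists on. You prove the contrapositive $\neg\mathrm{terminates} \Rightarrow \forall k\,\exists\,(i',c')$ reachable with $c'\geq k$. Combined with the boundedness hypothesis this yields only $\neg\neg\mathrm{terminates}$. Since termination is an existential statement over a decidable predicate (there exists $n$ such that the $n$-fold step reaches a halting configuration), passing from $\neg\neg\exists$ to $\exists$ is precisely Markov's principle, which the paper explicitly does not assume. Your closing remark misdiagnoses the danger: the obstruction is not a choice principle but double-negation elimination on the existential. The paper's proof avoids this by arguing \emph{directly}: given the bound $k$, run the machine for $k\cdot(|\CM|+1)$ steps; whether a halting configuration occurs among these finitely many configurations is decidable, and if none does, Lemma~\ref{fac:cm1step}(\ref{fac:cm1stepfcinc}) forces the counter to increase strictly $k$ times from a value $\geq 1$, contradicting the bound $k$. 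This replaces your unbounded search for a halting configuration by a bounded, decidable one. All the ingredients for this repair are already in your argument; only the logical packaging --- direct bounded search instead of contraposition --- needs to change.
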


\begin{proof}
If from the configuration $(i,c)$ the machine $\CM$ halts after $n$ steps, then $k = 1 + c \cdot 2^n$ bounds the reachable from $(i,c)$ counter values.
Conversely, if $k$ bounds the reachable from the configuration~$(i,c)$ counter values, then after at most $k\cdot (|\CM|+1)$ steps a~halting configuration is reached by Lemma~\ref{fac:cm1step}(\ref{fac:cm1stepfcinc}).
\end{proof}

The halting problem for one-counter machines (Problem~\ref{prb:cm1halt}) starting from the fixed\footnote{We fix a~starting configuration in order to obtain a~simpler recursive simulation in the proof of Lemma~\ref{lem:cm1-dlsm}.} configuration $(0,1)$ is undecidable by reduction from the halting problem for two-counter machines (Lemma~\ref{lem:mm-to-cm}).

\begin{problem}[One-counter Machine $1$-Halting]
\label{prb:cm1halt}
Given a~one-counter machine $\CM$, does the configuration $(0, 1)$ terminate in~$\CM$?
\end{problem}

\begin{lemma}
\label{lem:mm-to-cm}
Two-counter machine halting (Problem~\ref{prb:cm2halt}) many-one reduces to one-counter machine $1$-halting (Problem~\ref{prb:cm1halt}).
\end{lemma}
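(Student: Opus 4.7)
My plan is to construct, from a two-counter machine $\MM$ and initial values $a, b$, a one-counter machine $\CM$ that simulates $\MM$ starting from the fixed configuration $(0, 1)$. The encoding represents the $\MM$-state $(i, (a', b'))$ by a $\CM$-configuration whose counter is $2^{a'} 3^{b'} \cdot g$ for some~$g$ coprime with~$6$; this garbage factor $g$ will absorb one extra factor of~$5$ at each simulated decrement. The crucial observation is that the admissible fractions compose as $\frac{3}{2} \cdot \frac{4}{3} \cdot \frac{5}{4} = \frac{5}{2}$, which decrements the exponent of~$2$ while introducing one factor of~$5$, and as $\frac{4}{3} \cdot \frac{5}{4} = \frac{5}{3}$, which similarly decrements the exponent of~$3$.

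The $\CM$ is laid out in three consecutive regions. An initialization prefix of length $a + 2b$ raises the counter from~$1$ to~$2^a 3^b$ by $a$ unconditional applications of $\frac{2}{1}$ followed by $b$ pairs $\frac{2}{1}, \frac{3}{2}$, whose final instruction jumps into the main region at $F(0)$. An auxiliary region then provides, for every $\dec{0}{j}$-instruction at $\MM$-index~$i$, a two-step branch $G_0(i)$ applying $\frac{4}{3}$ then $\frac{5}{4}$ and jumping to $F(j)$, and for every $\dec{1}{j}$-instruction a one-step branch $G_1(i)$ applying $\frac{5}{4}$ and jumping to $F(j)$. The main region contains one block $F(i)$ per $\MM$-instruction: $\inc{0}$ is a single unconditional $\frac{2}{1}$ jumping to $F(i+1)$; $\inc{1}$ is the pair $\frac{2}{1}, \frac{3}{2}$; $\dec{0}{j}$ is a single test instruction branching on divisibility of the counter by~$2$, applying $\frac{3}{2}$ and jumping to $G_0(i)$ on success and falling through to the next $\CM$-index on failure; $\dec{1}{j}$ is analogous with divisibility by~$3$ and~$\frac{4}{3}$. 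Placing the main region last gives $F(|\MM|) = |\CM|$, so halting $\MM$-indices correspond exactly to halting $\CM$-indices.

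Correctness rests on the invariant that whenever control is at $F(i)$ with counter $2^{a'} 3^{b'} g$, the simulated $\MM$-configuration is $(i, (a', b'))$. A case analysis over $\MM$-instructions shows that each $\MM$-step is implemented by at most three $\CM$-steps, with divisibility of the counter by~$2$ (resp.~$3$) detecting exactly $a' \geq 1$ (resp.~$b' \geq 1$). The zero-case of a $\dec$-instruction is realized by fall-through, which leaves the counter unchanged; this is crucial, because every $\CM$-jump multiplies the counter and would corrupt the encoding. The $\dec$-blocks are therefore deliberately one instruction long, so that the fall-through naturally lands at~$F(i+1)$. Since all internal jumps target valid $\CM$-indices, the $\CM$ halts iff the simulated $\MM$ reaches an index $\geq |\MM|$. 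The main obstacle is exactly that the restricted instruction set cannot decrement the counter directly, forcing both the detour through $\frac{3}{2}, \frac{4}{3}, \frac{5}{4}$ and the commitment to implement every unconditional control transition as fall-through rather than as a jump; the remainder is routine bookkeeping over block addresses and divisibility conditions.
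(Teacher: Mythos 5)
Your proposal is correct and follows essentially the same route as the paper's proof: the same $2^{a}3^{b}5^{m}$ encoding (your garbage factor $g$ coprime with $6$ is exactly the power of $5$), the same fraction compositions $\frac{3}{2}\cdot\frac{4}{3}\cdot\frac{5}{4}=\frac{5}{2}$ and $\frac{4}{3}\cdot\frac{5}{4}=\frac{5}{3}$ for the two decrements, and the same initialization $(\frac{2}{1})$- and $(\frac{3}{2})$-prefix. The only divergence is a layout detail: the paper appends an unconditional-jump gadget $\frac{2}{1}\cdot\frac{2}{1}\cdot\frac{5}{4}$ after each simulated decrement so that a failed divisibility test can continue at the appropriate index, whereas you make each decrement block in the main region a single test instruction (moving its continuation to an auxiliary region) so that fall-through lands at $F(i+1)$ directly.
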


\begin{proof}
Let $\MM$ be a~two-counter machine and let $a_0, b_0 \in \bbN$ be two starting counter values.

We represent a~pair~$(a, b)$ of counter values of $\MM$ by the family of counter values $2^{a}3^{b}5^{m}$ where $m \in \bbN$.
The purpose of the factor $5^m$ is to increase whenever either the factor~$2^a$ or the factor~$3^b$ decreases.
We simulate instructions of $\MM$ on two counters $(a, b)$ by instructions of $\CM$ on one counter $c = 2^{a}3^{b}5^{m}$ as follows.

\begin{description}
\item[Increase $a$] $2^{a}3^{b}5^{m} \cdot \frac{2}{1} = 2^{a+1}3^{b}5^{m}$;
\item[Increase $b$] $2^{a}3^{b}5^{m} \cdot \frac{2}{1} \cdot \frac{3}{2} = 2^{a}3^{b+1}5^{m}$;
\item[Decrease $a$] $2^{a+1}3^{b}5^{m} \cdot \frac{3}{2} \cdot \frac{4}{3} \cdot \frac{5}{4} = 2^{a}3^{b}5^{m+1}$;
\item[Decrease $b$] $2^{a}3^{b+1}5^{m} \cdot \frac{4}{3} \cdot \frac{5}{4} = 2^{a}3^{b}5^{m+1}$.
\end{description}

\newpage

Simulated decrease instructions are followed by a~simulation of an unconditional jump instruction via $2^{a}3^{b}5^{m} \cdot \frac{2}{1} \cdot \frac{2}{1} \cdot \frac{5}{4} = 2^{a}3^{b}5^{m+1}$, such that on failed decrease the simulation can continue with an appropriate program index.
Initialization of counter values $(a_0, b_0)$ is simulated via $2^0 3^0 5^0 \cdot (\frac{2}{1})^{a_0 + b_0} \cdot (\frac{3}{2})^{b_0} = 2^{a_0} 3^{b_0} 5^0$.

Overall, the configuration $(0, (a_0, b_0))$ terminates in the two-counter machine $\MM$ iff the configuration $(0, 1)$ terminates in the one-counter machine $\CM$.
\end{proof}

\begin{corollary}
One-counter machine $1$-halting (Problem~\ref{prb:cm1halt}) is undecidable.
\end{corollary}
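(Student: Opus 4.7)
The plan is to simply chain the two immediately preceding results. Corollary~\ref{cor:CM2_undec} asserts that two-counter machine halting (Problem~\ref{prb:cm2halt}) is undecidable, and Lemma~\ref{lem:mm-to-cm} exhibits a many-one reduction from Problem~\ref{prb:cm2halt} to one-counter machine $1$-halting (Problem~\ref{prb:cm1halt}). Since many-one reductions transport undecidability from source to target, the corollary follows without further work.

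More concretely, the reduction function $f$ extracted from the proof of Lemma~\ref{lem:mm-to-cm} sends a triple $(\MM, a_0, b_0)$ to a one-counter machine $\CM$ for which the configuration $(0, (a_0, b_0))$ terminates in $\MM$ iff $(0, 1)$ terminates in $\CM$. Any decision procedure for Problem~\ref{prb:cm1halt} would therefore compose with $f$ to yield a decision procedure for Problem~\ref{prb:cm2halt}, contradicting Corollary~\ref{cor:CM2_undec}. There is no substantial obstacle here: the technical content — the encoding $(a,b) \mapsto 2^a 3^b 5^m$ and the accompanying simulation of the four instruction shapes of $\MM$ — was already carried out in Lemma~\ref{lem:mm-to-cm}, and the transfer of undecidability along a constructive many-one reduction requires none of excluded middle, Markov's principle, or the fan theorem.
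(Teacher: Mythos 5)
Your proposal is correct and matches the paper's (implicit) argument exactly: the corollary is stated without a proof body precisely because it follows by transporting undecidability from Corollary~\ref{cor:CM2_undec} along the many-one reduction of Lemma~\ref{lem:mm-to-cm}. Nothing further is needed.
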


The following Example~\ref{xmp:mm-to-cm} illustrates the construction in the proof of Lemma~\ref{lem:mm-to-cm}, simulating a~looping two-counter machine from Example~\ref{xmp:mm-loop}.

\begin{example}
\label{xmp:mm-to-cm}
Consider the two-counter machine $\MM = [\inc{0}, \dec{0}{0}]$ from Example~\ref{xmp:mm-loop} with the initial counter values $(a_0, b_0) = (1,1)$.
Following the proof of Lemma~\ref{lem:mm-to-cm}, we construct the one-counter machine
\begin{align*}
\CM = [&(1,1), (2,1), (3,2), (4,1), (5,2), (6,3), (3,4)]
\end{align*}
Starting from the configuration $(0,1) = (0, 2^0 3^0 5^0)$ a~run of $\CM$ starts with the initialization
\begin{align*}
(0, 2^0 3^0 5^0) &\stackrel{(1,1)}{\longrightarrow_{\CM}} (1, 2^1 3^0 5^0) \stackrel{(2,1)}{\longrightarrow_{\CM}} (2, 2^2 3^0 5^0)
\stackrel{(3,2)}{\longrightarrow_{\CM}} (3, 2^1 3^1 5^0) = (3, 2^{a_0} 3^{b_0} 5^0)
\end{align*}
Next, on iteration of the infinite loop of $\MM$ is simulated, returning to the program index $3$
\begin{align*}
(3, 2^1 3^1 5^0) &\stackrel{(4,1)}{\longrightarrow_{\CM}} (4, 2^2 3^1 5^0)
\stackrel{(5,2)}{\longrightarrow_{\CM}} (5, 2^1 3^2 5^0)
\stackrel{(6,3)}{\longrightarrow_{\CM}} (6, 2^3 3^1 5^0)
\stackrel{(3,4)}{\longrightarrow_{\CM}} (3, 2^1 3^1 5^1)
\end{align*}
While the factors $2^1$ and $3^1$ remain unchanged after the above simulated iteration, the factor~$5^0$ strictly increases to $5^1$, which is systematic.

As a~result, the configuration $(0,1) = (0, 2^0 3^0 5^0)$ does not terminate in $\CM$, simulating non-termination of the configuration $(0,(a_0, b_0))$ in $\MM$ as follows
\begin{align*}
(0, 2^0 3^0 5^0) &\longrightarrow_{\CM}^3 (3, 2^1 3^1 5^0) \longrightarrow_{\CM}^4 (3, 2^1 3^1 5^1) 
\longrightarrow_{\CM}^4 (3, 2^1 3^1 5^2) \longrightarrow_{\CM}^4 (3, 2^1 3^1 5^3) \longrightarrow_{\CM}^4 \ldots
\end{align*}
Indeed, there is no upper bound on the counter value, in congruence with Lemma~\ref{lem:cm-ter-bnd}.
\end{example}

\begin{remark}
Another universal collection of fractions is $\frac{6}{1}, \frac{1}{2}, \frac{1}{3}$~\cite[Exercise~{14.2-2}]{minsky1967computation}.
However, this collection allows for non-trivial loops and does not induce an increasing measure along the step relation.
\end{remark}

\begin{remark}\label{rem:cm-sim}
It is possible to use the fractions $\frac{6}{1}$, $\frac{5}{2}$, and $\frac{5}{3}$ to establish an increasing measure.
In comparison, the deliberate choice of counter multiplication by $\frac{d+1}{d}$ for a~counter modifier ${d \in \{1,2,3,4\}}$ has several benefits.
First, instructions are of uniform shape.
Therefore, simulation of and reasoning about such instructions requires less case analysis, also impacting the underlying mechanization.
Second, for a~counter value $c = k \cdot d$ multiplication by $\frac{d+1}{d}$ results in $c \cdot \frac{d+1}{d} = c + k$, which is easy to simulate in the two-stack machine model.
\end{remark}

\section{Two-stack Machines}
\label{sec:sm}
In this section, our goal is the simulation of one-counter machines in a~\emph{stack machine} model of computation without unbounded \emph{symbol search}.
Specifically, given a~one-counter machine~$\CM$, we construct a~two-stack machine~$\SM$ such that the configuration $(0, 1)$ terminates in~$\CM$ iff there is a~uniform bound on the number of configurations reachable in $\SM$ from any configuration.

The main difficulty, similar to the undecidability proof for Turing machine immortality~\cite{Hooper66}, is to simulate symbol search (traverse data, searching for a~particular symbol) using a~uniformly bounded machine.
Most problematic are unsuccessful searches that may traverse an arbitrary, i.e.\ not uniformly bounded, amount of data.
The key idea~\cite[Part~IV]{Hooper66} (see also~\cite{KariO08,Jeandel12}) is to implement unbounded symbol search by nested bounded symbol search.

In the present work, we supplement a~high level explanation of the construction (cf.~proof of Lemma~\ref{lem:cm1-dlsm}) with a~comprehensive case analysis as a~mechanized proof in the Coq proof assistant.
This approach, arguably worth striving for in general, has three advantages over existing work.
First, the proof idea is not cluttered with mundane technical details, while the mechanized proof is highly precise.
Second, a~mechanized proof leaves little doubt regarding proof correctness and is open to scrutiny, as there is nothing left to imagination.
Third, the Coq proof assistant tracks any non-constructive assumptions which may hide beneath technical details.

Let us specify the two-stack machine (Definition~\ref{def:sm}) model of computation, which we use to simulate one-counter machines.
An instruction of such a~machine may modify the current machine state, pop from, and push onto two stacks  of binary symbols. We denote the empty word by $\epsilon$, and the concatenation of words $A$ and $B$ by $AB$.

\begin{definition}[Two-stack Machine ($\SM$)]
\label{def:sm}
Let $\bbS$ be a~countably infinite set.
A \emph{two-stack machine}~$\SM$ is a~list of \emph{instructions} of shape $\config{A}{p}{B} \to \config{A'}{q}{B'}$, where $A,B,A',B' \in \{0,1\}^*$ are binary \emph{words} and $p, q \in \bbS$ are \emph{states}.

A \emph{configuration} of $\SM$ is of shape $\config{A}{p}{B}$ where $p \in \bbS$ is the \emph{current state}, $A \in \{0,1\}^*$ is the (reversed) content of the \emph{left stack} and $B \in \{0,1\}^*$ is the content of the \emph{right stack}.

The \emph{step relation} of $\SM$ on configurations, written ($\longrightarrow_{\SM}$), is given by
\begin{itemize}
\item if ${(\config{A}{p}{B} \to \config{A'}{q}{B'}) \in \SM}$, then for ${C, D \in \{0,1\}^*}$ we have 
$\config{CA}{p}{BD} \longrightarrow_{\SM} \config{CA'}{q}{B'D}$.
\end{itemize}

The \emph{reachability relation} of $\SM$ on configurations, written ($\longrightarrow_{\SM}^*$), is the reflexive, transitive closure of ($\longrightarrow_{\SM}$).
\end{definition}

\begin{remark}
A two-stack machine can be understood as a~restricted semi-Thue system on the alphabet $\{0,1\} \cup \bbS$ in which each word contains exactly one symbol from $\bbS$.
Such rewriting systems are employed in the setting of synchronous distributivity~\cite{AnantharamanELNR12}.
\end{remark}

\begin{remark}
To accommodate for arbitrary large machines, the state space~$\bbS$ is not finite.
However, the \emph{effective} state space of any two-stack machine $\SM$ is bounded by the finitely many states occurring in the instructions of~$\SM$ (a list of finite length).
\end{remark}

The key undecidable property of two-stack machines, used in~\cite{Dudenhefner20-SU}, is whether the number of distinct, reachable configurations from any configuration is \emph{uniformly bounded} (Definition~\ref{def:sm-ub}).

\begin{definition}[Uniformly Bounded]
\label{def:sm-ub}
A two-stack machine $\SM$ is \emph{uniformly bounded} if there exists an $n \in \bbN$ such that for any configuration $\config{A}{p}{B}$ we have 
\[|\{ \config{A'}{p'}{B'} \mid \config{A}{p}{B} \longrightarrow_{\SM}^* \config{A'}{p'}{B'} \}| \leq n\]
\end{definition}

Notably, uniform boundedness and \emph{uniform termination}~\cite{MatiyasevichS96} (is every configuration chain finite?) are orthogonal notions, illustrated by the following Examples~\ref{xmp:sm-ub-ut}--\ref{xmp:sm-nub-nut}.

\begin{example}
\label{xmp:sm-ub-ut}
Consider the empty two-stack machine $\SM = []$.
It is uniformly bounded by $n = 1$ and does uniformly terminate as it admits only singleton configuration chains.
\end{example}

\begin{example}
\label{xmp:sm-nub-ut}
Consider the two-stack machine $\SM = [\config{0}{p}{\epsilon} \to \config{\epsilon}{p}{1}]$.
From the configuration~$\config{0^m}{p}{\epsilon}$, where $m \in \bbN$, reachable in $\SM$ configurations are exactly $\config{0^{m-i}}{p}{1^i}$ for $i=0, \ldots, m$.
Therefore, there is no \emph{uniform} bound on the number of reachable configurations.
However, the length of every configuration chain in $\SM$ is finite (bounded by one plus the length of the left stack).
Overall, $\SM$ does uniformly terminate but is not uniformly bounded.
\end{example}

\begin{example}
\label{xmp:sm-ub-nut}
Consider the two-stack machine $\SM = [(\config{0}{p}{\epsilon} \to \config{\epsilon}{q}{1}), (\config{\epsilon}{q}{1} \to \config{0}{p}{\epsilon})]$.
The number of distinct configurations reachable in $\SM$ from any configuration is uniformly bounded by $n = 2$.
However, $\SM$ admits an infinite configuration chain 
\[\config{0}{p}{\epsilon} \longrightarrow_{\SM} \config{\epsilon}{q}{1} \longrightarrow_{\SM} \config{0}{p}{\epsilon} \longrightarrow_{\SM} \config{\epsilon}{q}{1} \longrightarrow_{\SM} \ldots\]
In summary, $\SM$ is uniformly bounded, but does not uniformly terminate.
\end{example}

\begin{example}
\label{xmp:sm-nub-nut}
Consider the two-stack machine $\SM = [\config{0}{p}{\epsilon} \to \config{\epsilon}{p}{1}, \config{\epsilon}{p}{1} \to \config{0}{p}{\epsilon}]$.
Similarly to Example~\ref{xmp:sm-nub-ut}, $\SM$ is not uniformly bounded.
Additionally, similarly to Example~\ref{xmp:sm-ub-nut}, it admits an infinite configuration chain
\[\config{0}{p}{\epsilon} \longrightarrow_{\SM} \config{\epsilon}{p}{1} \longrightarrow_{\SM} \config{0}{p}{\epsilon} \longrightarrow_{\SM} \config{\epsilon}{p}{1} \longrightarrow_{\SM} \ldots\]
In summary, $\SM$ is neither uniformly bounded nor does uniformly terminate.
\end{example}

In literature~\cite{Hooper66,KariO08}, counter machine termination is simulated using uniformly bounded Turing machines directly rather than by two-stack machines.
This is reasonable when omitting technical details regarding the exact Turing machine construction.
However, for verification in full detail, Turing machines are quite unwieldy, compared to two-stack machines.
Unfortunately, we cannot rely on existing mechanized Turing machine programming techniques~\cite{ForsterTM20}, as they establish only functional properties, and are incapable to establish uniform boundedness.

\subsection{Deterministic, Length-preserving Two-stack Machines}
\label{subsec:dlsm}
There are several properties of two-stack machines that are of importance in our construction in order to reuse existing work~\cite{Dudenhefner20-SU}.

For \emph{length-preserving} two-stack machines (Definition~\ref{def:sm-lp}) the sum of lengths of the two stacks is invariant wrt.\ reachability.
For each configuration, length-preservation bounds (albeit, not uniformly) the number of distinct, reachable configurations.
Therefore, reachability is decidable (in polynomial space) for length-preserving two-stack machines.

\begin{definition}[Length-preserving]
\label{def:sm-lp}
A two-stack machine $\SM$ is \emph{length-preserving} if for all instructions ${(\config{A}{p}{B} \to \config{A'}{q}{B'}) \in \SM}$ we have
$0 < |A|+|B| = |A'| + |B'|$.
\end{definition}

\begin{definition}[Deterministic]
\label{def:sm-det}
A two-stack machine $\SM$ is \emph{deterministic} if for all configurations $\config{A}{p}{B}$, $\config{A'}{p'}{B'}$, and $\config{A''}{p''}{B''}$ such that
$\config{A}{p}{B} \longrightarrow_{\SM} \config{A'}{p'}{B'}$ and $\config{A}{p}{B} \longrightarrow_{\SM} \config{A''}{p''}{B''}$ we have $\config{A'}{p'}{B'} = \config{A''}{p''}{B''}$.
\end{definition}

\begin{example}
\label{xmp:sm-nondet}
Two-stack machines in Examples~\ref{xmp:sm-ub-ut}--\ref{xmp:sm-ub-nut} are deterministic and length-preserving.
The two-stack machine $\SM = [\config{0}{p}{\epsilon} \to \config{\epsilon}{p}{1}, \config{\epsilon}{p}{1} \to \config{0}{p}{\epsilon}]$ from Example~\ref{xmp:sm-nub-nut} is length-preserving, but not deterministic because of $\config{0}{p}{1} \longrightarrow_{\SM} \config{\epsilon}{p}{11}$ and $\config{0}{p}{1} \longrightarrow_{\SM} \config{00}{p}{\epsilon}$.
\end{example}

\begin{remark}
Deterministic, length-preserving two-stack machines can be understood as a~generalization of intercell Turing machines~\cite[Section 3]{KTU93SemiU}, for which the transition rule has a~bounded read/write/move radius around the current head position.
Tape content on the left (resp. right) of the current head position is exactly the content of the left (resp. right) stack.
\end{remark}

The key undecidable problem, that in our argument assumes the role of Turing machine immortality of previous approaches~\cite{KTU93SemiU,Dudenhefner20-SU}, is uniform boundedness of deterministic, length-preserving two-stack machines (Problem~\ref{prb:dlsmub}).
A central insight of the present work is that using this problem as an intermediate step we neither require Turing reductions, König's lemma (cf.~\cite{KTU93SemiU}), nor the fan theorem (cf.~\cite{Dudenhefner20-SU}).
Additionally, a~mechanization for both the reduction from counter machine halting to this problem and the reduction from this problem to semi-unification is of manageable size.

\begin{problem}[Deterministic, Length-preserving Two-stack Machine Uniform Boundedness]
\label{prb:dlsmub}
Given a~deterministic, length-preserving two-stack machine $\SM$, is $\SM$ uniformly bounded?
\end{problem}

The original undecidability proof of semi-unification contains a~hint \cite[Proof of Corollary 6]{KTU93SemiU} that Turing machine immortality may be avoided.
Accordingly, Lemma~\ref{lem:cm1-dlsm} captures the decisive step, which avoids Turing machine immortality.

\begin{remark}[Naive Simulation]
\label{rem:cm-sim-naive}
Given a~one-counter machine $\CM$, we can construct a~length-preserving two-stack machine $\SM$ such that $(0,1)$ terminates in $\CM$ iff configurations $\config{A1}{0}{01B}$ for $A, B \in \{0, 1\}^*$ are uniformly bounded in~$\SM$.

A $\CM$-configuration $(i,c)$ can be represented by the $\SM$-configuration $\config{1}{i}{0^c10^m}$, where $m\in \bbN$, such that $0^m$ is long enough to simulate a~terminating run.
In particular, the counter value~$c$ is unary encoded as $0^c1$ with $1$ as separator.
A $\CM$-instruction~$(j,d)$ at index $i$ is simulated as follows.
The $\SM$-instruction $\config{\epsilon}{i_?}{0^d} \longrightarrow_\SM \config{0^d}{i_?}{\epsilon}$ tests for divisibility by $d$, moving consecutive blocks of $d$ zeroes from the right stack to the left stack.
The divisibility test either fails or succeeds.
\begin{description}
\item[Failure $(i,c) \longrightarrow_\CM (i+1,c)$] The $\SM$-instruction $\config{\epsilon}{i_?}{0^k1} \longrightarrow_\SM \config{\epsilon}{i_\#}{0^k1}$ can be applied, where $k$ is the remainder such that $0 < k < d$.
In this case, we move the zeroes back to the right stack, and via the $\SM$-instruction $\config{1}{i_\#}{\epsilon} \longrightarrow_\SM \config{1}{i+1}{\epsilon}$ we reach the $\SM$-configuration $\config{1}{i+1}{0^c10^m}$, representing the next $\CM$-configuration $(i+1,c)$.
\item[Success $(i,c) \longrightarrow_\CM (j,c \cdot \frac{d+1}{d})$] The $\SM$-instruction $\config{\epsilon}{i_?}{1} \longrightarrow_\SM \config{\epsilon}{i_!}{1}$ can be applied.
In this case, multiplication by $\frac{d+1}{d}$ is simulated as follows.
For each block of consecutive~$d$ zeroes on the left stack shift the separator $1$ on the right stack one position to the right (Remark~\ref{rem:cm-sim}).
Finally, via the $\SM$-instruction $\config{1}{i_!}{\epsilon} \longrightarrow_\SM \config{1}{j}{\epsilon}$, we arrive at the $\SM$-configuration $\config{1}{j}{0^{\frac{c(d+1)}{d}}10^{m-\frac{c}{d}}}$, representing the next $\CM$-configuration $(j,c \cdot \frac{d+1}{d})$.
\end{description}
We obtain $(i,c) \longrightarrow_\CM^* (i',c')$ iff $\config{A1}{i}{0^c10^{c'-c}B} \longrightarrow_\SM^* \config{A1}{i'}{0^{c'}1B}$ for all words~$A,B \in \{0,1\}^*$.
Therefore, the configuration $(0,1)$ terminates in $\CM$ iff configurations $\config{A1}{0}{01B}$ are uniformly bounded in~$\SM$ (the uniform bound is derived from the halting counter value).
\end{remark}

Unfortunately, the naive construction in the above Remark~\ref{rem:cm-sim-naive} in general does not describe a~uniformly bounded two-stack machine.
Most importantly, the naive construction assumes and preserves the \enquote{safe} format $\config{A1}{i}{0^c1B}$ of machine configurations $(i, c)$, where $c$ is less or equal to the halting counter value.
Arbitrary configurations do not need to follow this format.
At fault is a~symbol search (for the symbol $1$ on the right stack) that needs to traverse an arbitrary amount of data, illustrated by the following Example~\ref{xmp:cm-sim-naive}.

\begin{example}
\label{xmp:cm-sim-naive}
Consider a~one-counter machine $\CM$ containing the instruction $(j, d)$ at index $i$.
The naive construction of a~two-stack machine $\SM$ from Remark~\ref{rem:cm-sim-naive} contains the $\SM$-instruction $\config{\epsilon}{i_?}{0^d} \longrightarrow_\SM \config{0^d}{i_?}{\epsilon}$.
For any bound $n \in \bbN$ we have that $|\{ \config{A}{p}{B} \mid \config{1}{i_?}{0^{n \cdot d}1} \longrightarrow_\SM^* \config{A}{p}{B}\}| > n$.
Therefore, $\SM$ is not uniformly bounded.
\end{example}

\newpage

The ingenious idea by Hooper~\cite[Part~IV]{Hooper66} is to use \emph{nested simulation} for symbol search.
In the present scenario, to search for the symbol $1$ on the right stack, start a~new simulation from the $\CM$-configuration $(0,1)$ inside the space of consecutive zeroes on the right stack.
Nested simulation achieves three goals.
First, it transitions into a~\enquote{safe} configuration format $\config{A1}{i}{0^c1B}$, where $c$ is less or equal to the halting counter value.
Second, by inspecting symbols in the immediate neighborhood of the separator $1$ on the right stack, it checks whether the original search for the symbol~$1$ succeeds in appropriately limited space.
Third, in case the space limit for symbol search is exceeded, it simulates a~terminating run of the given machine.

Similarly to~\cite[Theorem 7]{KariO08}, we adapt Hooper's nested simulation for symbol search in the following Lemma~\ref{lem:cm1-dlsm}.

\begin{lemma}
\label{lem:cm1-dlsm}
One-counter machine $1$-halting (Problem~\ref{prb:cm1halt}) many-one reduces to deterministic, length-preserving two-stack machine uniform boundedness (Problem~\ref{prb:dlsmub}).
\end{lemma}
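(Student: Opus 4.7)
The plan is to refine the naive simulation of Remark~\ref{rem:cm-sim-naive} so that every potentially unbounded symbol search on the stacks is replaced by a \emph{nested} simulation of $\CM$ starting from $(0,1)$, in the spirit of Hooper's construction. Concretely, given a one-counter machine $\CM$, I would build a deterministic, length-preserving two-stack machine $\SM$ whose state space factors into a finite control (program index together with local flags for the naive simulation of a single $\CM$-step) and a recursion tag distinguishing nested simulations from their parents. All transitions implementing $\CM$-instructions keep the total stack length constant by exchanging single symbols between the two stacks, as in Remark~\ref{rem:cm-sim-naive}; determinism is ensured by making instruction left-hand sides pairwise incompatible via the state tagging.

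The core of the construction is the replacement of the divisibility test $\config{\epsilon}{i_?}{0^d} \to \config{0^d}{i_?}{\epsilon}$: rather than sweeping an unbounded distance searching for the separator $1$, the simulator inspects only a bounded window (say, $d+1$ symbols) of the right stack. If a $1$ appears inside this window, the divisibility outcome is decided locally in the style of Remark~\ref{rem:cm-sim-naive}. Otherwise the simulator enters a \emph{child} simulation mode that treats the surrounding block of $0$s as its own tape, starts a fresh run of $\CM$ from $(0,1)$ encoded in the naive format, and uses the child's eventual termination (detected again through its own, possibly further nested, searches) to resume the parent's search with the correct outcome. Children may themselves spawn further children whenever their own searches would exceed the local window.

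For correctness I would argue as follows. If $(0,1)$ terminates in $\CM$, Lemma~\ref{lem:cm-ter-bnd} yields a uniform counter bound $k$, and hence a uniform bound on the number of $\CM$-steps and on the encoding length required per simulated run; from this I derive a uniform bound $N$ on the number of $\SM$-configurations reachable from any $\SM$-configuration, because at each recursion level the simulator either makes bounded progress in its simulated $\CM$-run (at most $O(k \cdot |\CM|)$ transitions per simulated instruction) or terminates the level and returns to its parent, so both the recursion depth and the per-level traffic are bounded independently of the starting $\SM$-configuration. Conversely, if $(0,1)$ does not terminate in $\CM$, Lemma~\ref{lem:cm-ter-bnd} supplies configurations reaching arbitrarily large counter values, and I would exhibit, for each $n$, a starting $\SM$-configuration (a sufficiently long block of $0$s surrounding the encoded state) from which the root simulation visits more than $n$ distinct configurations before any nested search is triggered, refuting uniform boundedness.

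The main obstacle is engineering the nested protocol so that simultaneously (a) each instruction is deterministic and length-preserving, (b) a child simulation can cleanly hand control back to its parent together with the outcome of the original search, and (c) the recursion depth and per-level traffic remain bounded uniformly in the starting configuration whenever $\CM$ halts. Keeping these invariants straight is precisely the tedium that Hooper famously left unwritten; the reduction therefore commits to a concrete, mechanization-friendly instruction schema based on single-symbol swaps with tagged states, and the bulk of the verification is a case analysis showing that reachability in $\SM$ faithfully mirrors reachability in $\CM$ up to the introduced nesting, which then transports the bound from Lemma~\ref{lem:cm-ter-bnd} through the recursion.
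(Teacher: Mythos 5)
Your overall strategy coincides with the paper's: both refine the naive simulation of Remark~\ref{rem:cm-sim-naive} by replacing every unbounded search for the separator $1$ with a Hooper-style nested simulation of $\CM$ from $(0,1)$ inside the block of zeroes being searched, and both transport the bound of Lemma~\ref{lem:cm-ter-bnd} through a depth-bounded recursion in one direction while using unbounded counter values to refute uniform boundedness in the other. The paper's own proof is only a sketch of exactly this shape, deferring the case analysis to the mechanization, so at this level of granularity your plan is the intended one.

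There is, however, one design decision where your sketch, taken literally, would fail, and it is the point the paper is most explicit about. You propose to keep ``a recursion tag distinguishing nested simulations from their parents'' in the state space and to have the child ``hand control back to its parent.'' Since nesting depth is unbounded over arbitrary starting configurations, the stack of suspended return addresses cannot live in the finite control; and a mere marker ``I am currently nested'' is useless for uniform boundedness, because that property quantifies over \emph{all} configurations, including ill-formed ones carrying any tag, which are locally indistinguishable from genuine nested-simulation configurations (the paper states this obstacle in a dedicated remark). The paper's resolution is to push the suspended program index onto the \emph{left stack} as a fixed-length binary word $C$, transitioning $\config{A}{p}{0^{k+3}B}$ into $\config{AC1}{0}{01B}$, so that the return address is recovered from stack content when the child detects the nearby separator, the absence of one, or an oversized block of zeroes --- the three cases that also dispose of ill-formed configurations in bounded space. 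A second, minor slip: in the non-terminating direction you claim the root run visits many configurations ``before any nested search is triggered,'' but every divisibility test beyond a constant window does trigger a nested search; the correct statement is that each such search returns after boundedly many steps (bounded by the current, finite distance to the separator), while the simulated counter, and hence the number of distinct reachable configurations from $\config{1}{0}{010^m}$ for growing $m$, is unbounded.
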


\begin{proof}
Let $\CM$ be a~one-counter machine.
We follow the naive construction in Remark~\ref{rem:cm-sim-naive} with the following difference.
In order to search for the symbol $1$ on the right stack in some configuration $\config{A}{p}{0^{k+3}B}$, transition into the configuration $\config{AC1}{0}{01B}$, i.e.\ reset the program index $p$ to $0$ and retain the binary encoding of $p$ in $C$ of fixed length~$k$.

Assume that the $\CM$-configuration $(0,1)$ terminates in $\CM$ and let $c$ be the counter value of the halting configuration.
For a~nested simulation from the configuration $\config{AC1}{0}{01B}$ in search of symbol~$1$ in $B$, there are three cases.
\begin{description}
\item[Case $B = 0^m1D$, where $m < c-1$ and ${D \in \{0,1\}^*}$] The number $m$ of zeroes on the right stack is too small to accommodate for $c$.
Eventually, the nested simulation is unable to simulate counter increase.
This is detected by inspecting symbols in the immediate neighborhood of the separator $1$ on the right stack.
In this situation, the initial search for the symbol~$1$ in $B$ succeeds, the previous level configuration is restored, and control is returned to the previous level.
The content of $D$ is not accessed, and symbol search succeeds in bounded space (the bound is derived from $m < c-1$).
\item[Case $B = 0^m$, where $m < c-1$] The size of the right stack is too small to accommodate for $c$.
Eventually, the nested simulation is unable to shift the separator $1$ to the right (in fact, apply any instruction), and halts.
In this situation, the initial search for the symbol~$1$ fails in bounded space (the bound is derived from $m < c-1$).

Since the original configuration $\config{A}{p}{0^{k+3}B}$ is not in the \enquote{safe} format (missing separator~$1$ on the right stack), this case handles ill-formed configurations in bounded space, and the search is immaterial.
\item[Case $B = 0^{c-1}D$, where $D \in \{0,1\}^*$] There are enough consecutive zeroes in~$B$ to simulate a~terminating run of $\CM$ from the initial configuration $(0,1)$.
The content of $D$ is not accessed, and the simulation terminates in bounded space (the bound is derived from $c-1$).

Since the original configuration $\config{A}{p}{0^{k+3}B}$ is not in the \enquote{safe} format (too many consecutive zeroes on the right stack), this case handles ill-formed configurations in bounded space, and the search is immaterial.
\end{description}
Each case may require further symbol search, and therefore introduce further nested simulation.
Notably, consecutive nested simulation is performed inside the space of at most $c$ consecutive zeroes (followed by the separator $1$ in the \enquote{safe} format).
Therefore, the nesting depth is at most $c$.
In each case (using Lemma~\ref{fac:cm1step}) a~bound on the number of configurations for the nested simulation can be derived from~$c$.
Therefore, $\SM$ is uniformly bounded.

For the converse, assume that the $\CM$-configuration $(0,1)$ does not terminate in $\CM$.
By Lemma~\ref{lem:cm-ter-bnd}, there is no bound on reachable counter values.
Therefore, there is no uniform bound on the number of configurations reachable from configurations $\config{1}{0}{010^m}$, where $m \in \bbN$ is arbitrary large.
\end{proof}

\begin{corollary}
Deterministic, length-preserving two-stack machine uniform boundedness (Problem~\ref{prb:dlsmub}) is undecidable.
\end{corollary}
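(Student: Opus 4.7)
The plan is to obtain the corollary as an immediate transport of undecidability along the many-one reduction just established. Specifically, Lemma~\ref{lem:cm1-dlsm} gives a computable function mapping any one-counter machine $\CM$ to a deterministic, length-preserving two-stack machine $\SM$ such that the configuration $(0,1)$ terminates in $\CM$ if and only if $\SM$ is uniformly bounded. The preceding corollary records that Problem~\ref{prb:cm1halt} (one-counter machine $1$-halting) is undecidable, which itself comes by chaining Lemma~\ref{lem:TM_to_CM2} and Lemma~\ref{lem:mm-to-cm} from the undecidability of Turing machine halting.

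The only step is then to invoke the standard fact that undecidability is preserved by many-one reductions: if Problem~\ref{prb:dlsmub} admitted a decider $D$, then composing $D$ with the reduction function of Lemma~\ref{lem:cm1-dlsm} would yield a decider for Problem~\ref{prb:cm1halt}, contradicting its undecidability. In the synthetic, Coq-mechanized setting adopted throughout the paper, this transport is literally an application of the library's general lemma stating that reducibility of a non-decidable predicate to another predicate renders the latter non-decidable.

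There is no genuine obstacle here; the work has already been done in Lemma~\ref{lem:cm1-dlsm}, and the corollary is purely a bookkeeping consequence. The only thing to check is that the reduction function constructed in the proof of Lemma~\ref{lem:cm1-dlsm} is recognised as computable in the formal framework, which is automatic for an axiom-free Coq definition. Hence the entire argument is a one-line composition of the many-one reduction with the previously proved undecidability of Problem~\ref{prb:cm1halt}.
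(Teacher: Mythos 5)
Your argument is correct and matches the paper's (implicit) justification exactly: the corollary is stated without proof precisely because it follows by transporting the undecidability of one-counter machine $1$-halting along the many-one reduction of Lemma~\ref{lem:cm1-dlsm}. Nothing further is needed.
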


\begin{remark}
At first glance, recursive nested simulation in the proof of Lemma~\ref{lem:cm1-dlsm} seems superfluous.
After all, a~nested simulation establishes a~\enquote{safe} configuration format with bounded (naive) symbol search.
However, it is not possible to remember reliably in the current configuration (e.g.\ in the current state) whether a~nested simulation is running.
An ill-formed configuration may be locally indistinguishable from a~configuration of a~nested simulation.
Therefore, any (recursive) symbol search is performed using nested simulation.
\end{remark}

\begin{remark}
The exact analysis of the nested simulation in the proof of Lemma~\ref{lem:cm1-dlsm} requires a~tremendous inductive proof with many corner cases.
Arguably, it is unreasonable for a~human without mechanical assistance to write it down in full detail (cf.\ Hooper's remark in Section~\ref{sec:intro}).
Additionally, it would require a~comparable amount of effort for others to verify such a~massive construction.
This is why, in order to guarantee its correctness, a~mechanized proof of Lemma~\ref{lem:cm1-dlsm} is adequate (Section~\ref{sec:mec}) to establish the result.
\end{remark}

\subsection{Confluent, Simple Two-stack Machines}
\label{subsec:cssm}
To further simplify the construction, we consider confluent (instead of deterministic), simple (Definition~\ref{def:sm-simp}, cf.~\cite[Definition~16]{Dudenhefner20-SU}) two-stack machines.
Uniform boundedness for such two-stack machines (Problem~\ref{prb:cssmub}) is well-suited for reduction to a~fragment of semi-unification (cf.\ Section~\ref{subsec:ssu}).

\begin{definition}[Simple]
\label{def:sm-simp}
A two-stack machine $\SM$ is \emph{simple}
if for all instructions\\
$(\config{A}{p}{B} \to \config{A'}{q}{B'}) \in \SM$ we have
$1 = |A|+|B| = |A'| + |B'| = |A| + |A'| = |B| + |B'|$.
\end{definition}

\begin{remark}
A deterministic, simple two-stack machine is just another way to present a~deterministic Turing machine.
The left and right stacks contain the respective tape content to the left and to the right from the current head.
Reading and writing at the head while moving the head position is easily presented as simple instructions (cf.~\cite[Remark 19]{Dudenhefner20-SU}).
\end{remark}

\begin{remark}
Turing machine immortality is reducible to uniform boundedness of deterministic, simple two-stack machines by a~bounded Turing reduction~\cite[Theorem 2]{Dudenhefner20-SU}.
However, the argument uses the fan theorem, therefore, it is crucial for the present argument to not rely on this particular reduction.
\end{remark}

\begin{definition}[Confluent]
\label{def:sm-conf}
A two-stack machine $\SM$ is \emph{confluent} if for all configurations
$\config{A}{p}{B}$, $\config{A'}{p'}{B'}$, and $\config{A''}{p''}{B''}$ such that
$\config{A}{p}{B} \longrightarrow_{\SM}^* \config{A'}{p'}{B'}$ and $\config{A}{p}{B} \longrightarrow_{\SM}^* \config{A''}{p''}{B''}$ there exists a~configuration $\config{C}{q}{D}$ such that
$\config{A'}{p'}{B'} \longrightarrow_{\SM}^* \config{C}{q}{D}$ and $\config{A''}{p''}{B''} \longrightarrow_{\SM}^* \config{C}{q}{D}$.
\end{definition}

Any deterministic two-stack machine is confluent (Lemma~\ref{lem:det-conf}), but not necessarily vice versa (Example~\ref{xmp:sm-nondet-conf}).

\begin{lemma}
\label{lem:det-conf}
If a~two-stack machine $\SM$ is deterministic, then $\SM$ is confluent.
\end{lemma}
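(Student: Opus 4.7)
The plan is to derive confluence from determinism via the stronger property that on any deterministic machine, reachability is linearly ordered along every forward cone: for any configuration $\gamma$ and any two configurations $\gamma_1,\gamma_2$ with $\gamma \longrightarrow_\SM^* \gamma_1$ and $\gamma \longrightarrow_\SM^* \gamma_2$, one of $\gamma_1 \longrightarrow_\SM^* \gamma_2$ or $\gamma_2 \longrightarrow_\SM^* \gamma_1$ holds. Once this is established, confluence is immediate by taking $\config{C}{q}{D}$ to be whichever of $\gamma_1$, $\gamma_2$ is reached later.

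The main work is an induction on the number of steps. Write $\gamma \longrightarrow_\SM^n \gamma_1$ and $\gamma \longrightarrow_\SM^m \gamma_2$, and proceed by induction on $n+m$ (or equivalently by a nested case analysis on whether $n$ or $m$ is zero). If either $n=0$ or $m=0$, the conclusion is trivial: one configuration equals $\gamma$ and is reached by the other via the given chain. Otherwise both chains have a first step $\gamma \longrightarrow_\SM \gamma'_1$ and $\gamma \longrightarrow_\SM \gamma'_2$; by determinism (Definition~\ref{def:sm-det}) we obtain $\gamma'_1 = \gamma'_2$, so both $\gamma_1$ and $\gamma_2$ are reachable from this common successor in strictly fewer total steps, and the induction hypothesis applies.

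From this linear property, confluence follows by a single case split: whichever of $\gamma_1 \longrightarrow_\SM^* \gamma_2$ or $\gamma_2 \longrightarrow_\SM^* \gamma_1$ holds, the joining configuration is $\gamma_2$ or $\gamma_1$ respectively, with the other direction witnessed by reflexivity of $\longrightarrow_\SM^*$.

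I do not expect any serious obstacle; the only subtlety is formally setting up the induction so that the step count of the induction measure strictly decreases after applying determinism to collapse the two first steps. Using $n+m$ as the induction measure (or, equivalently, performing induction on one chain after a symmetric WLOG assumption $n \leq m$) makes this bookkeeping routine and is what I would mechanize.
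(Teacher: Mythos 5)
Your proposal is correct and follows essentially the same route as the paper: the paper's proof consists precisely of the observation that for a deterministic machine, $\config{A}{p}{B} \longrightarrow_{\SM}^* \config{A'}{p'}{B'}$ and $\config{A}{p}{B} \longrightarrow_{\SM}^* \config{A''}{p''}{B''}$ imply one of the two configurations is reachable from the other, which is exactly your linearity lemma. You merely spell out the induction on step counts that the paper leaves implicit.
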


\begin{proof}
For a~deterministic two-stack machine $\SM$ we have that $\config{A}{p}{B} \longrightarrow_{\SM}^* \config{A'}{p'}{B'}$ and $\config{A}{p}{B} \longrightarrow_{\SM}^* \config{A''}{p''}{B''}$ implies that $\config{A'}{p'}{B'} \longrightarrow_{\SM}^* \config{A''}{p''}{B''}$ or $\config{A''}{p''}{B''} \longrightarrow_{\SM}^* \config{A'}{p'}{B'}$.
\end{proof}

\newpage

\begin{example}
\label{xmp:sm-nondet-conf}
The two-stack machine $\SM = [\config{0}{p}{\epsilon} \to \config{\epsilon}{p}{1}, \config{\epsilon}{p}{1} \to \config{0}{p}{\epsilon}]$ from Example~\ref{xmp:sm-nub-nut} is non-deterministic, as observed in Example~\ref{xmp:sm-nondet}. However, $\SM$ is confluent.
The instruction $\config{0}{p}{\epsilon} \to \config{\epsilon}{p}{1}$ can be \enquote{undone} by the instruction $\config{\epsilon}{p}{1} \to \config{0}{p}{\epsilon}$ and vice versa.
Therefore, configuration chains can be reversed to join branching computation.
This technique is used in the proof of Lemma~\ref{lem:dlsm-cssm}.
\end{example}

Compared to deterministic machines, confluent machines are quite practical. For example, a~confluent machine may (without additional bookkeeping) \enquote{try out} different configuration chains before choosing the preferable one (cf.\ proof of Lemma~\ref{lem:dlsm-cssm}).

\begin{problem}[Confluent, Simple Two-stack Machine Uniform Boundedness]
\label{prb:cssmub}
Given a~confluent, simple two-stack machine~$\SM$, is $\SM$ uniformly bounded?
\end{problem}

By Lemma~\ref{lem:det-conf}, the above Problem~\ref{prb:cssmub} subsumes deterministic, simple two-stack machine uniform boundedness~\cite[Problem~26]{Dudenhefner20-SU}.
This, in combination with the following Lemma~\ref{lem:dlsm-cssm}, allows for adaptation of previous work\footnote{It is possible to carry out the construction in the original, deterministic scenario without adaptation. However, this is technically more challenging and provides no tangible benefit.} in Section~\ref{subsec:ssu}.

\begin{lemma}
\label{lem:dlsm-cssm}
Deterministic, length-preserving two-stack machine uniform boundedness (Problem~\ref{prb:dlsmub}) many-one reduces to confluent, simple two-stack machine uniform boundedness (Problem~\ref{prb:cssmub}).
\end{lemma}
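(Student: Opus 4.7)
The plan is to construct a confluent, simple two-stack machine $\SM'$ from a deterministic, length-preserving machine $\SM$ by simulating each width-$n$ instruction of $\SM$ as a chain of at most $3n$ simple instructions through fresh intermediate states, and to secure confluence by augmenting each chain with the reverse of every step whose target is an intermediate state (but not of a step landing on an $\SM$-state).

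Concretely, for each $\SM$-instruction $I = (\config{A}{p}{B} \to \config{A'}{q}{B'})$ of width $n$, I introduce fresh intermediate states $s_1^I, s_2^I, \ldots$ unique to $I$ and design a simple-instruction chain from $p$ to $q$ in three phases: (a) $|B|$ moves relocate $B$ from the top of the right stack onto the top of the left stack, placing the entire $n$-symbol window on the left; (b) $|B'| + |A'|$ further simple steps pop this window off the left and push first $B'$ (top-most being the intended $b'_1$) and then a reversed buffer of $A'$ (top-most being $a'_1$) onto the right; (c) $|A'|$ simple steps pop the buffered reversed $A'$ off the right and push its symbols back onto the left in turn, reconstructing $A'$ with $a'_{k'}$ on top. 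Length preservation $|A|+|B|=|A'|+|B'|$ guarantees that the stacks line up. Finally, for each forward step landing on an intermediate state, I include its reverse as an $\SM'$-instruction.

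For confluence, the key argument is anchoring: every $\SM'$-configuration at an intermediate state $s_i^I$ admits a unique reverse trace back to the chain-start $\SM$-configuration $C$ at $p$, and the $\SM'$-reachable set from either coincides. Since $\SM$ is deterministic, its forward trajectory from any configuration is a linear sequence, so any two $\SM'$-reachable configurations from a common start have totally ordered $\SM$-anchors along a single sequence, and the later anchor is forward-reachable from the earlier via the deterministic chain of simulations, yielding a common forward-reachable join. Uniform boundedness transfers in both directions: a uniform $\SM$-bound $B$ yields a uniform $\SM'$-bound, since reachable $\SM'$-configurations factor through at most $B$ reachable $\SM$-anchors, each contributing a bounded starburst of chain intermediates; conversely, $\SM$-configurations embed as $\SM'$-configurations at $\SM$-states, so any $\SM'$-bound transfers to $\SM$.

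The main obstacle I expect is the asymmetric treatment of reverses at chain endpoints. If the reverse of the final step $s_m^I \to q$ were included, then from $q$ (an $\SM$-state) the machine could reverse across simulated $\SM$-steps and reach $\SM$-predecessors, which may be unbounded even when $\SM$ is bounded forward, breaking the $\SM'$-bound; conversely, omitting reverses at intermediate states would break confluence when different chains branch out from the same $\SM$-state. Determinism of $\SM$ resolves the tension, since at most one chain from any anchor completes, while the other branches all end in intermediates that remain reversible back to the anchor.
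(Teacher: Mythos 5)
Your proposal is correct and follows essentially the same route as the paper: each length-preserving instruction is unfolded into a chain of simple instructions through per-instruction fresh states, and confluence is secured by adding reverses of exactly those chain steps whose target is an intermediate state (the final step into the target $\SM$-state is not reversed), so that failed lookahead can be undone and every intermediate configuration is anchored to its unique $\SM$-configuration. Your explicit anchoring argument for confluence and the observation about why reverses must stop at chain endpoints make precise what the paper leaves implicit.
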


\begin{proof}[Proof]
Our objective is to shorten length-preserving instructions, while maintaining confluence. This is routine, storing local stack information in additional fresh states.
For example, an instruction $\config{00}{p}{\epsilon} \longrightarrow \config{11}{q}{\epsilon}$ can be replaced by the simple instructions $\config{0}{p}{\epsilon} \longrightarrow \config{\epsilon}{p_1}{0}$, $\config{0}{p_1}{\epsilon} \longrightarrow \config{\epsilon}{p_2}{0}$, $\config{\epsilon}{p_2}{0} \longrightarrow \config{1}{p_3}{\epsilon}$, and $\config{\epsilon}{p_3}{0} \longrightarrow \config{1}{q}{\epsilon}$, where $p_1, p_2, p_3$ are fresh states.
This results in the configuration chain 
\[\config{00}{p}{\epsilon} \longrightarrow \config{0}{p_1}{0} \longrightarrow \config{\epsilon}{p_2}{00} \longrightarrow \config{1}{p_3}{0} \longrightarrow \config{11}{q}{\epsilon}\]
which simulates the instruction $\config{00}{p}{\epsilon} \longrightarrow \config{11}{q}{\epsilon}$.

Notably, it is difficult to maintain determinism for failed look-ahead, when we want to preserve elegance of the above simulation.
However, in order to maintain confluence it suffices to add reverse instructions from fresh states, that is $\config{\epsilon}{p_1}{0} \longrightarrow \config{0}{p}{\epsilon}$, $\config{\epsilon}{p_2}{0} \longrightarrow \config{0}{p_1}{\epsilon}$, and ${\config{1}{p_3}{\epsilon} \longrightarrow \config{\epsilon}{p_2}{0}}$. Therefore, any failed attempt to read local stack information can be reversed (cf.~Example~\ref{xmp:sm-nondet-conf}) and computation is confluent.
\end{proof}

\begin{corollary}
Confluent, simple two-stack machine uniform boundedness (Problem~\ref{prb:cssmub}) is undecidable.
\end{corollary}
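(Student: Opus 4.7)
The plan is to obtain undecidability of Problem~\ref{prb:cssmub} as an immediate consequence of Lemma~\ref{lem:dlsm-cssm} combined with the undecidability of Problem~\ref{prb:dlsmub} (the corollary preceding Lemma~\ref{lem:dlsm-cssm}). Since many-one reducibility preserves undecidability in both classical and constructive settings, provided the reduction function is computable, and since Lemma~\ref{lem:dlsm-cssm} supplies precisely such a reduction, any decision procedure for confluent, simple two-stack machine uniform boundedness would compose with the reduction to yield one for deterministic, length-preserving two-stack machine uniform boundedness, contradicting the latter's undecidability.

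A more explicit rendition would compose the chain of many-one reductions already established in the paper: Turing machine halting many-one reduces to two-counter machine halting (Lemma~\ref{lem:TM_to_CM2}); this reduces to one-counter machine $1$-halting (Lemma~\ref{lem:mm-to-cm}); this reduces to deterministic, length-preserving two-stack machine uniform boundedness (Lemma~\ref{lem:cm1-dlsm}); and finally, by Lemma~\ref{lem:dlsm-cssm}, the latter reduces to Problem~\ref{prb:cssmub}. Transitivity of many-one reducibility then yields a many-one reduction from Turing machine halting directly to Problem~\ref{prb:cssmub}, which in fact establishes $\mathsf{RE}$-hardness, foreshadowing the overall main theorem.

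No significant obstacle is anticipated, since the argument is a one-line application of the transfer of undecidability along a many-one reduction. The only point worth flagging is the constructive reading: one should explicitly invoke the reduction function $f$ from Lemma~\ref{lem:dlsm-cssm} together with the biconditional it satisfies, so that a putative decider for the confluent, simple case composes with $f$ to decide the deterministic, length-preserving case. This composition is elementary and introduces no choice principles or classical reasoning beyond what Lemma~\ref{lem:dlsm-cssm} already contains.
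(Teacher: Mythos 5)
Your proposal is correct and matches the paper's (implicit) justification: the corollary follows immediately from Lemma~\ref{lem:dlsm-cssm} together with the already-established undecidability of Problem~\ref{prb:dlsmub}, via transfer of undecidability along a many-one reduction. The explicit composition with the earlier reduction chain from Turing machine halting is exactly how the paper assembles its main result, so nothing is missing.
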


Finally, let us justify the detour from one-counter machines via deterministic, length-preserving two-stack machines to confluent, simple two-stack machines.
By far the most complicated part of the overall reduction is the simulation of a~terminating machine by a~uniformly bounded machine (cf.~Lemma~\ref{lem:cm1-dlsm}).
For this reduction, it is beneficial to have a~restricted source machine model and an expressive target machine model.
Of course, a~\emph{direct} simulation of terminating one-counter machines by uniformly bounded, confluent, simple two-stack machines is possible.
However, the restricted nature of simple instructions would induce an unwieldy two-stack machine construction, rendering the mechanization more laborious.

\newpage

\section{Semi-unification}
Semi-unification (Problem~\ref{prb:semiu}) can be understood as a~combination of first-order unification (cf.\ substitution~$\varphi$) and first-order matching (cf.\ substitutions~$\psi$).
For the undecidability of semi-unification~\cite[Theorem~12]{KfouryTU93}, it suffices to restrict the syntax of the underlying terms (Definition~\ref{def:term}) to variables together with a~single binary constructor $(\to)$.

In this section, we recapitulate necessary definitions and properties of semi-unification from existing work~\cite{KTU93SemiU,Dudenhefner20-SU}, in order to complete a~constructive many-one reduction from Turing machine halting to semi-unification (Theorem~\ref{thm:tm-semiu}).

\begin{definition}[Terms $(\bbT)$] 
\label{def:term}
Let $\alpha, \beta, \gamma$ range over a~countably infinite set $\bbV$ of \emph{variables}.
The set of \emph{terms} $\bbT$, ranged over by $\sigma, \tau$, is given by the following grammar
\[\sigma, \tau \in \bbT ::= \alpha \mid \sigma \to \tau\]
\end{definition}

\begin{definition}[Substitution $(\varphi), (\psi)$]
\label{def:substitution}
A \emph{substitution} $\varphi : \bbV \to \bbT$ assigns terms to variables, and is tacitly lifted to terms.
\end{definition}

\begin{problem}[Semi-unification~{\cite[\textsf{SUP}]{KTU93SemiU},~\cite[Problem~3]{Dudenhefner20-SU}}]
\label{prb:semiu}~\\
Given a~set $\calI = \{\sigma_1 \leq \tau_1, \ldots, \sigma_n \leq \tau_n\}$ of \emph{inequalities},
is there a~substitution $\varphi$ such that for each inequality $(\sigma \leq \tau) \in \calI$
there exists a~substitution $\psi : \bbV \to \bbT$ such that $\psi(\varphi(\sigma)) = \varphi (\tau)$?
\end{problem}

\begin{remark}
\label{rem:semiu-rec}
As given by Definition~\ref{def:substitution}, the set of substitutions is not countable.
However, in any semi-unification instance~$\calI$ the number of inequalities (consisting of first-order terms) is finite.
Therefore, restricting substitutions to be finite maps (from the relevant variables) does not change the expressive power of semi-unification.
As a~result, semi-unification is recursively enumerable.
\end{remark}

The following Example~\ref{xmp:supos} and Example~\ref{xmp:suneg} illustrate a~positive and negative instances of semi-unification.

\begin{example}
\label{xmp:supos}
Consider $\calI = {\{ \alpha \leq \alpha \to \alpha,\ \alpha \leq \alpha \to \alpha \to \alpha \}}$. 
The semi-unification instance~$\calI$ is solved by the substitution $\varphi$ such that $\varphi(\alpha) = \alpha$.
\begin{itemize}
\item
For the inequality $\alpha \leq \alpha \to \alpha$ there exists a~substitution~$\psi_1$ such that $\psi_1(\alpha) = \alpha \to \alpha$.
We have $\psi_1(\varphi(\alpha)) = \varphi(\alpha \to \alpha)$.
\item For the inequality $\alpha \leq \alpha \to \alpha \to \alpha$ there exists a~(different from $\psi_1$) substitution~$\psi_2$ such that $\psi_2(\alpha) = \alpha \to \alpha \to \alpha$.
We have $\psi_2(\varphi(\alpha)) = \varphi(\alpha \to \alpha \to \alpha)$.
\end{itemize}
\end{example}

\begin{example}
\label{xmp:suneg}
Consider $\calI = {\{ \alpha \to \alpha \leq \alpha \}}$. 
The semi-unification instance $\calI$ has no solution.
Assume that there exist substitutions $\varphi$ and $\psi$ such that $\psi(\varphi(\alpha \to \alpha)) = \varphi(\alpha)$.
Therefore, the size of the syntax tree of $\varphi(\alpha)$ is twice the size of the syntax tree $\psi(\varphi(\alpha))$ which is not possible for (non-empty, finite) terms.
\end{example}

The following Example~\ref{xmp:su-rec} compares type inference for \emph{parametric polymorphism}~\cite{Milner78} based on unification, and for \emph{recursive polymorphism}~\cite{Mycroft84} based on semi-unification.

\begin{example}
\label{xmp:su-rec}
Consider the following functional program \lstinline|f|, where \lstinline|not| is Boolean negation
\begin{lstlisting}
f b x = if b then x else f (f (not b) (not b)) x
\end{lstlisting}
Basically, \lstinline|f b x| reduces to \lstinline|x|, regardless of the value of \lstinline|b|.
However, both \lstinline|x| and \lstinline|not b| are used as second argument for some recursive call of \lstinline|f|.

\newpage

For languages with \emph{parametric polymorphism}, recursive calls are considered \emph{monomorphic}.
Therefore, type inference for \lstinline|f| unifies the type of \lstinline|x| and the type of \lstinline|not b|.
This results in the inferred type $\mathtt{f} : \mathtt{bool} \to \mathtt{bool} \to \mathtt{bool}$.

For languages with \emph{recursive polymorphism} the type of each individual recursive call can be instantiated separately.
The problematic recursive call \lstinline|f (not b) (not b)| is associated with the inequality $\sigma \leq \tau$, where $\sigma = \mathtt{bool} \to \alpha \to \alpha$ is the global type scheme and $\tau = \mathtt{bool} \to \mathtt{bool} \to \mathtt{bool}$ is the local type scheme.
Any substitution $\varphi$ such that $\varphi(\alpha) = \beta$ for some $\beta \in \bbV$ solves the inequality $\sigma \leq \tau$ setting $\psi(\beta) = \mathtt{bool}$.
That is, we have $\psi(\varphi(\sigma)) = \varphi(\tau)$.
This results in the inferred type $\mathtt{f} : \mathtt{bool} \to \alpha \to \alpha$.
\end{example}

Unfortunately, semi-unification does not admit a~decision procedure based on an occurs-check, which is a~common approach to both first-order unification and first-order matching~\cite{BaaderS01}.
However, it is challenging to construct an unsolvable example of manageable size, for which the occurs-check is not triggered.

Originally~\cite[Theorem 12]{KTU93SemiU}, semi-unification is proven undecidable by Turing reduction from Turing machine immortality~\cite{Hooper66}. As intermediate problems, the argument relies on symmetric intercell Turing machine boundedness, path equation derivability, and termination of a~custom-tailored redex contraction procedure.
Additionally, the argument uses König’s lemma and it is not obvious whether it can be presented constructively.

A modern approach~\cite[Theorem 4]{Dudenhefner20-SU} simplifies the original argument.
It still relies on a~Turing reduction from Turing machine immortality, but uses only deterministic, simple two-stack machine uniform boundedness to show undecidability of a~fragment of semi-unification.
Additionally, it relies on the fan theorem, which is strictly weaker than König’s lemma and is valid in Brouwer's intuitionism.
The argument is partially mechanized in Coq.

In the remainder of this section we briefly recapitulate and reuse the modern approach~\cite{Dudenhefner20-SU} in the more general case of confluent, simple two-stack machines.
This allows us to avoid Turing machine immortality, Turing reductions, and the fan theorem in the overall argument.

\subsection{Simple Semi-unification}
\label{subsec:ssu}
In this section, we recapitulate the intermediate problem of \emph{simple semi-unification} (Problem~\ref{prb:ssu})~\cite[Problem 15]{Dudenhefner20-SU}, which connects stack machine computation and semi-unification.
Intuitively, term variables represent machine states, simple constraints (Definition~\ref{def:sconstr}) represent local stack transformations, and the model relation (Definition~\ref{def:smodel}) captures machine reachability via substitutions.

\begin{definition}[Simple Constraint~{\cite[Definition~6]{Dudenhefner20-SU}}]
\label{def:sconstr}
A \emph{simple constraint} has the shape $\config{a}{\alpha}{\epsilon} \doteq \config{\epsilon}{\beta}{b}$, where $a,b \in \{0,1\}$ are symbols and $\alpha, \beta \in \bbV$ are variables.
\end{definition}

\begin{definition}[Model Relation~{\cite[Definition~9]{Dudenhefner20-SU}}]
\label{def:smodel}
A substitution triple $(\varphi, \psi_0, \psi_1)$ \emph{models} a~simple constraint $\config{a}{\alpha}{\epsilon} \doteq \config{\epsilon}{\beta}{b}$, written ${(\varphi, \psi_0, \psi_1) \models \config{a}{\alpha}{\epsilon} \doteq \config{\epsilon}{\beta}{b}}$, if one of the following conditions holds
\begin{itemize}
\item $b = 0$ and $\psi_a(\varphi(\alpha)) \to \tau = \varphi(\beta)$ for some term $\tau \in \bbT$;
\item $b = 1$ and $\sigma \to \psi_a(\varphi(\alpha)) = \varphi(\beta)$ for some term $\sigma \in \bbT$.
\end{itemize}
\end{definition}

The intuition behind a~simple constraint $\config{a}{\alpha}{\epsilon} \doteq \config{\epsilon}{\beta}{b}$ is as follows.
The symbol $a \in \{0,1\}$ specifies the associated substitution $\psi_a$.
This corresponds to considering two inequalities~\cite[Remark after Theorem~12]{KfouryTU93}.
The symbol $b \in \{0,1\}$ specifies whether we are interested in the left of the right subterm of $\varphi(\beta)$.
This allows for arbitrary deep exploration of term structure.

\newpage

\begin{problem}[Simple Semi-unification {\cite[Problem~15]{Dudenhefner20-SU}}]
\label{prb:ssu}
Given a~finite set $\calC$ of simple constraints, do there exist substitutions $\varphi, \psi_0, \psi_1 : \bbV \to \bbT$ such that for each simple constraint $(\config{a}{\alpha}{\epsilon} \doteq \config{\epsilon}{\beta}{b}) \in \calC$ we have $(\varphi, \psi_0, \psi_1) \models \config{a}{\alpha}{\epsilon} \doteq \config{\epsilon}{\beta}{b}$?
\end{problem}

The following Examples~\ref{xmp:ssu-yes}--\ref{xmp:ssu-no} illustrate a~positive and a~negative instance of simple semi-unification.

\begin{example}
\label{xmp:ssu-yes}
Consider the simple constraints $\calC = \{ \config{0}{\alpha}{\epsilon} \doteq \config{\epsilon}{\beta}{1} \}$.
A possible substitution triple $(\varphi, \psi_0, \psi_1)$ which models $\config{0}{\alpha}{\epsilon} \doteq \config{\epsilon}{\beta}{1}$ is such that
$\varphi(\alpha) = \alpha$, ${\varphi(\beta) = \beta_1 \to \beta_2}$, and $\psi_0(\alpha) = \beta_2$.
Indeed, we have $\beta_1 \to \psi_0(\varphi(\alpha)) = \beta_1 \to \beta_2 = \varphi(\beta)$.
In fact, $\calC$ is related to the uniformly bounded two-stack machine $\SM = [(\config{0}{p}{\epsilon} \to \config{\epsilon}{q}{1}), (\config{\epsilon}{q}{1} \to \config{0}{p}{\epsilon})]$ from Example~\ref{xmp:sm-ub-nut}.
\end{example}

\begin{example}
\label{xmp:ssu-no}
Consider the simple constraints $\calC = \{ \config{0}{\alpha}{\epsilon} \doteq \config{\epsilon}{\alpha}{1} \}$.
There is no substitution triple $(\varphi, \psi_0, \psi_1)$ which models $\config{0}{\alpha}{\epsilon} \doteq \config{\epsilon}{\alpha}{1}$.
For some $\sigma$ such a~substitution triple would satisfy $\sigma \to \psi_0(\varphi(\alpha)) = \varphi(\alpha)$.
This is not possible because the syntax tree of $\sigma \to \psi_0(\varphi(\alpha))$ is strictly larger than the syntax tree of $\varphi(\alpha)$.
In fact, $\calC$ is related to the two-stack machine $\SM = [\config{0}{p}{\epsilon} \to \config{\epsilon}{p}{1}]$, which is not uniformly bounded, from Example~\ref{xmp:sm-nub-ut}.
\end{example}

The pivotal result in previous work~\cite[Section 4]{Dudenhefner20-SU} is a~many-one reduction from deterministic, simple two-stack machine uniform boundedness to simple semi-unification.
We recapitulate this result in the confluent case (Lemma~\ref{lem:cssm-ssu}).

\begin{lemma}
\label{lem:cssm-ssu}
Confluent, simple two-stack machine uniform boundedness (Problem~\ref{prb:cssmub}) many-one reduces to simple semi-unification (Problem~\ref{prb:ssu}).
\end{lemma}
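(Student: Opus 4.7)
The plan is to adapt the many-one reduction from deterministic, simple two-stack machine uniform boundedness to simple semi-unification of~\cite[Section~4]{Dudenhefner20-SU}, verifying that its argument relies only on confluence and not on full determinism. Given a confluent, simple two-stack machine $\SM$, the reduction produces a finite set $\calC$ of simple constraints by identifying each state $p$ of $\SM$ with a variable $\alpha_p$ and translating each simple instruction into one or two simple constraints. For instance, a simple instruction $\config{a}{p}{\epsilon} \to \config{\epsilon}{q}{b}$ becomes the simple constraint $\config{a}{\alpha_p}{\epsilon} \doteq \config{\epsilon}{\alpha_q}{b}$; the three other shapes of simple instructions admitted by Definition~\ref{def:sm-simp} are normalised into this form by introducing auxiliary states (cf.~\cite[Remark~19]{Dudenhefner20-SU}).

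For the forward direction, assume $\SM$ is uniformly bounded by some $n \in \bbN$. For each state $p$ occurring in $\SM$, I define $\varphi(\alpha_p)$ to be a finite term whose subterm positions correspond to the finitely many configurations reachable from a generic $\SM$-configuration with state $p$ on the stack boundary; uniform boundedness is what guarantees finiteness of this term. The substitutions $\psi_0$ and $\psi_1$ are read off as the ``one step after reading symbol $0$ (resp.\ $1$) on the left stack'' transformations. Every simple instruction of $\SM$ then corresponds to an arrow structure in $\varphi(\alpha_q)$ that is matched by $\psi_a(\varphi(\alpha_p))$ as its left or right subterm, so the triple $(\varphi, \psi_0, \psi_1)$ models $\calC$. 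For the converse direction, given a solution $(\varphi, \psi_0, \psi_1)$, an induction on the length of an $\SM$-reduction shows that each reachable configuration corresponds to a distinct subterm position inside one of the finitely many terms $\varphi(\alpha_p)$ for states $p$ of $\SM$, yielding a uniform bound.

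The main obstacle is the forward direction. In the deterministic setting of~\cite{Dudenhefner20-SU} each configuration has at most one successor, so $\varphi(\alpha_p)$ naturally encodes a single computation path. In the confluent setting multiple instructions may be applicable from the same state, and one must take care that every outgoing instruction of state $p$, not only the locally chosen one, is modelled by the term. Here confluence (cf.~Example~\ref{xmp:sm-nondet-conf}) plays the role that determinism plays in~\cite{Dudenhefner20-SU}: any two outgoing branches from a given configuration eventually join at a common descendant, and the finite term associated with that descendant can be exposed as a subterm reachable via both alternatives, ensuring that all relevant constraints are simultaneously modelled. The remainder of the argument, including the extraction of a uniform bound from the sizes of $\varphi(\alpha_p)$ in the converse direction, is a straightforward transcription of~\cite[Section~4]{Dudenhefner20-SU}.
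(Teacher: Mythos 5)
Your proposal follows essentially the same route as the paper: reuse the constraint construction of~\cite[Section~4]{Dudenhefner20-SU}, identifying each state with a variable and each simple instruction with a simple constraint, and observe that the correctness lemmas there (\cite[Lemma~45 and Lemma~48]{Dudenhefner20-SU}) only ever use confluence rather than determinism. One small inaccuracy: by Definition~\ref{def:sm-simp} a simple instruction has exactly two shapes (not four), and the paper handles the second shape $\config{\epsilon}{q}{b} \to \config{a}{p}{\epsilon}$ not by introducing auxiliary states (a simple instruction already moves a single symbol and cannot be decomposed further while remaining simple) but by mapping it to the same undirected constraint $\config{a}{p}{\epsilon} \doteq \config{\epsilon}{q}{b}$ as its reverse.
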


\begin{proof}
We tacitly inject machine states into variables, i.e.\ $\bbS \subseteq \bbV$. 
Given a~confluent, simple two-stack machine $\SM$, we construct the set of simple constraints $\calC$~\cite[Definition~40]{Dudenhefner20-SU}:
\begin{align*}
\calC = \{ \config{a}{p}{\epsilon} \doteq \config{\epsilon}{q}{b} \mid\, & (\config{a}{p}{\epsilon} \to \config{\epsilon}{q}{b}) \in \SM \text{ or } 
(\config{\epsilon}{q}{b} \to \config{a}{p}{\epsilon}) \in \SM \}
\end{align*}
\begin{description}
\item[{\cite[Lemma~45]{Dudenhefner20-SU}}] If $\SM$ is uniformly bounded, then there exists a~substitution triple $(\varphi, \psi_0, \psi_1)$ which models each simple constraint in $\calC$.
\item[{\cite[Lemma~48]{Dudenhefner20-SU}}] If a~substitution triple $(\varphi, \psi_0, \psi_1)$ models each constraint in $\calC$, then the maximal depth of the syntax trees in the range of $\varphi$ induces a~uniform bound on the number of configurations reachable from any configuration in~$\SM$.\qedhere
\end{description}
\end{proof}

\begin{corollary}
Simple semi-unification (Problem~\ref{prb:ssu}) is undecidable.
\end{corollary}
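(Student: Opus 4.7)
The plan is to adapt the reduction from~\cite{Dudenhefner20-SU}, which handles the deterministic, simple case, and extend it to the confluent case. The construction of the constraint set is the one already indicated in the statement: for each simple instruction ${(\config{a}{p}{\epsilon} \to \config{\epsilon}{q}{b}) \in \SM}$ or its reverse, add the simple constraint $\config{a}{p}{\epsilon} \doteq \config{\epsilon}{q}{b}$ to $\calC$. Crucially, this construction already treats forward and backward transitions symmetrically, so the constraint set does not distinguish between an $\SM$-step and the reverse of an $\SM$-step. In the deterministic case, reverse steps merely duplicate information; in the confluent case, they represent genuine alternative computations that can be reversed and rejoined.

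For the forward direction (uniformly bounded $\SM$ implies solvable $\calC$), I would fix a uniform bound $n$ and, for each state $p$, define $\varphi(p)$ to be a term whose syntax tree of depth at most $n$ encodes the configurations reachable from $\config{\epsilon}{p}{\epsilon}$, with the left and right subterms of a node corresponding to the effect of a symbol on the left and right stack respectively. The substitutions $\psi_0, \psi_1$ then re-root these trees according to consuming a symbol from the left stack. Verifying that the resulting triple models each constraint is essentially the argument of~\cite[Lemma~45]{Dudenhefner20-SU}; the only new point is that, when two computation paths reach the same configuration, the encoding must assign them the same subterm. Here confluence is exactly what is needed to join such paths, replacing the trivial uniqueness used in the deterministic case.

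For the backward direction, I would lift~\cite[Lemma~48]{Dudenhefner20-SU} essentially verbatim. Given a model $(\varphi, \psi_0, \psi_1)$, let $d$ be the maximum depth of syntax trees occurring in $\varphi$ applied to the finitely many states of $\SM$. By induction on the length of a computation, every configuration $\config{A'}{p'}{B'}$ reachable from $\config{A}{p}{B}$ corresponds to a distinct subterm obtained by peeling off the symbols of $A'$ and $B'$ via $\psi_0, \psi_1$ from $\varphi(p)$ (shared between all starting configurations with the same state $p$). Since there are only finitely many subterms of $\varphi(p)$, bounded as a function of $d$, one obtains a uniform bound on the number of reachable configurations.

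The main obstacle is the forward direction, specifically establishing well-definedness of $\varphi$ in the confluent rather than deterministic setting. Without determinism, one cannot simply read off $\varphi(p)$ from a single computation tree rooted at $p$; different paths might assign different subterms to the same reachable configuration. Confluence provides the needed coherence, since any two such paths can be extended to meet, and uniform boundedness ensures the shared meeting point has a consistent encoding. The remaining bookkeeping — the symmetric constraint set, the treatment of the two stack sides via $\psi_0, \psi_1$, and the bit $b$ selecting the left or right subterm — is then exactly as in~\cite{Dudenhefner20-SU}.
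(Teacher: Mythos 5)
Your proposal matches the paper's route: it constructs the same symmetric constraint set from the machine instructions and establishes both directions exactly as in the cited Lemmas~45 and~48 of~\cite{Dudenhefner20-SU}, with the paper likewise observing (in a remark) that only confluence, not determinism, is used in those proofs. The corollary then follows by composing this reduction with the already-established undecidability of confluent, simple two-stack machine uniform boundedness, just as you intend.
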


\begin{remark}
Although the original construction~\cite[Lemma 45 and Lemma 48]{Dudenhefner20-SU} is based on determinism, only confluence is used in the actual proofs~\cite[Lemma~30 and Lemma~39]{Dudenhefner20-SU}.
While tedious to verify by hand, the available mechanization allows for simple replacement of determinism by confluence.
Meanwhile, the proof assistant guarantees correctness of (or indicates problems with) any related details.
This highlights the effectiveness of proof assistants to accommodate for changes in a~complex argument, reevaluating overall correctness.
\end{remark}

\subsection{Right/Left-Uniform, Two-inequality Semi-unification}
\label{subsec:r2su}
In this section, we consider a~restriction of semi-unification to only two inequalities with identical right-hand sides (Problem~\ref{prb:su2}).

\begin{problem}[Right-uniform, Two-inequality Semi-unification]
\label{prb:su2}
Given two inequalities $\sigma_0 \leq \tau$ and $\sigma_1 \leq \tau$ with identical right-hand sides, do there exist substitutions $\varphi, \psi_0, \psi_1$ such that 
$\psi_0(\varphi(\sigma_0)) = \varphi (\tau)$ and $\psi_1(\varphi(\sigma_1)) = \varphi (\tau)$?
\end{problem}

\begin{remark}
Simply put, the above Problem~\ref{prb:su2} can be stated as follows:
Given three terms $\sigma_0, \sigma_1, \tau$, are there substitutions $\varphi, \psi_0, \psi_1$ such that 
$\psi_0(\varphi(\sigma_0)) = \varphi (\tau) = \psi_1(\varphi(\sigma_1))$?
\end{remark}

We adjust the existing reduction from simple semi-unification to (non right-uniform) two-inequality semi-unification~\cite[Theorem 1]{Dudenhefner20-SU} to produce right-uniform inequalities.

\begin{lemma}
\label{lem:ssu-su2}
Simple semi-unification (Problem~\ref{prb:ssu}) many-one reduces to right-uniform, two-inequality semi-unification (Problem~\ref{prb:su2}).
\end{lemma}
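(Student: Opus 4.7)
The plan is to reuse the existing reduction from simple semi-unification to (non right-uniform) two-inequality semi-unification given by~\cite[Theorem~1]{Dudenhefner20-SU}, and to compose it with a short uniformization wrapper that merges the two distinct right-hand sides into one. That prior reduction maps a set $\calC$ of simple constraints to two inequalities $\sigma_0 \leq \tau_0$ and $\sigma_1 \leq \tau_1$, where the index $a \in \{0,1\}$ matches the two substitutions $\psi_0,\psi_1$ of the simple semi-unification model relation (Definition~\ref{def:smodel}).

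To uniformize the right-hand sides, I would introduce two fresh variables $\delta_0,\delta_1 \in \bbV$ (not occurring in $\sigma_0,\sigma_1,\tau_0,\tau_1$) and emit the instance
\[
\sigma_0 \to \delta_0 \leq \tau_0 \to \tau_1, \qquad \delta_1 \to \sigma_1 \leq \tau_0 \to \tau_1.
\]
The key observation is that the binary constructor $(\to)$ is injective in the term algebra $\bbT$, so either wrapped inequality decomposes into a pair of equations on the two arrow arguments. The left projection of the first inequality recovers $\psi_0(\varphi(\sigma_0)) = \varphi(\tau_0)$ while the right projection is absorbed by the fresh $\delta_0$; symmetrically, the second inequality recovers $\psi_1(\varphi(\sigma_1)) = \varphi(\tau_1)$ with $\delta_1$ absorbing the left projection.

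For soundness, starting from a solution $(\varphi,\psi_0,\psi_1)$ to $\{\sigma_0 \leq \tau_0,\sigma_1 \leq \tau_1\}$, I would extend $\varphi$ by $\varphi(\delta_i) := \delta_i$ and extend the matching substitutions by $\psi_0(\delta_0) := \varphi(\tau_1)$ and $\psi_1(\delta_1) := \varphi(\tau_0)$. Freshness of $\delta_0,\delta_1$ makes these extensions consistent with the previously determined values, and a direct calculation then yields $\psi_a(\varphi(\sigma_a')) = \varphi(\tau_0 \to \tau_1)$ for both $a \in \{0,1\}$. Completeness is a case split on the top-level arrow structure of the shared $\tau_0 \to \tau_1$.

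The hard part is not the wrapping step itself but careful reuse of the earlier construction: one must verify that~\cite[Theorem~1]{Dudenhefner20-SU} indeed produces inequalities of a shape amenable to this trick, and that the composite remains constructive and many-one. The wrapper is patently computable, introduces only two fresh variables, and requires no new axiomatic principles, so with those hypotheses in place Lemma~\ref{lem:ssu-su2} follows and, combined with Lemmas~\ref{lem:TM_to_CM2}, \ref{lem:mm-to-cm}, \ref{lem:cm1-dlsm}, \ref{lem:dlsm-cssm}, and~\ref{lem:cssm-ssu}, yields the main result Theorem~\ref{thm:tm-semiu}.
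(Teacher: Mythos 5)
Your proposal is correct, but it takes a genuinely different route from the paper. The paper does not compose with the prior non-uniform reduction as a black box; instead it rebuilds that construction so that the shared right-hand side appears directly: from $\calC = \{\config{a_i}{\alpha_i}{\epsilon} \doteq \config{\epsilon}{\beta_i}{b_i}\}$ it forms the single term $\tau = \beta_1 \to \cdots \to \beta_n$ and two left-hand sides $\sigma^0, \sigma^1$ whose $i$-th component is $\alpha_i \to \gamma_i$ or $\gamma_i \to \alpha_i$ when the constraint is relevant to $\psi_j$ (depending on $b_i$) and a fresh padding variable $\gamma_i$ otherwise; the $\gamma_i$ play exactly the role of your $\delta_0,\delta_1$, but at the granularity of individual constraint positions rather than at the top level. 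Your wrapper $\sigma_0 \to \delta_0 \leq \tau_0 \to \tau_1$ and $\delta_1 \to \sigma_1 \leq \tau_0 \to \tau_1$ is sound: the backward direction follows from injectivity of $(\to)$ in the free term algebra as you say, and the forward direction works once you add the standard normalization that the given solution's range avoids the fresh variables (the paper needs the analogous ``w.l.o.g.\ $\varphi(\gamma_i)=\psi_0(\gamma_i)=\psi_1(\gamma_i)=\gamma_i$''). What your approach buys is modularity: it isolates a reusable, instance-independent ``right-uniformization'' reduction applicable to any pair of inequalities, so the only thing to check about the prior Theorem~1 is that it is a correct many-one reduction to two-inequality semi-unification with $\psi_0$ attached to the first inequality and $\psi_1$ to the second --- which it is, so your worry about its ``shape'' is moot. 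What the paper's inlined construction buys is a smaller, self-contained instance and a proof that lines up directly with the mechanized artifact.
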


\begin{proof}
Given constraints $\calC = \{\config{a_i}{\alpha_i}{\epsilon} \doteq \config{\epsilon}{\beta_i}{b_i} \mid i = 1, \ldots, n\}$, define $\tau = \beta_1 \to \cdots \to \beta_n$.
Let~$\gamma_i$ be fresh variables for $i=1, \ldots, n$ and define $\sigma^j = \sigma_1^j \to \cdots \to \sigma_n^j$ for $j \in \{0, 1\}$ where 
\[\sigma_i^j = \alpha_i \to \gamma_i \text{ if } a_i = j \text{ and } b_i = 0
\qquad
\sigma_i^j = \gamma_i \to \alpha_i \text{ if } a_i = j \text{ and } b_i = 1
\qquad
\sigma_i^j = \gamma_i  \text{ else}\]
We show that $\calC$ is solvable iff the right-uniform inequalities $\sigma^0 \leq \tau$ and $\sigma^1 \leq \tau$ are solvable.

\begin{itemize}
\item Assume that the substitution triple $(\varphi, \psi_0, \psi_1)$ models each simple constraint in~$\calC$.\\
W.l.o.g.\ $\varphi(\gamma_i) = \psi_0(\gamma_i) = \psi_1(\gamma_i) = \gamma_i$ for $i = 1, \ldots, n$.
By routine case analysis, we may adjust $\psi_0(\gamma_i)$ and $\psi_1(\gamma_i)$ for $i = 1, \ldots, n$ to obtain substitutions $\psi_0'$ and $\psi_1'$ such that $\psi_0'(\varphi(\sigma^0)) = \varphi (\tau) = \psi_1'(\varphi(\sigma^1))$.
\item Any solution $\varphi, \psi_0, \psi_1$ of $\sigma^0 \leq \tau$ and $\sigma^1 \leq \tau$ also models each constraint in~$\calC$.\qedhere
\end{itemize}

\end{proof}

\begin{corollary}
Right-uniform, two-inequality semi-unification (Problem~\ref{prb:su2}) is undecidable.
\end{corollary}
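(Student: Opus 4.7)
The plan is to obtain the corollary by transporting undecidability along the many-one reduction established in Lemma~\ref{lem:ssu-su2}. Concretely, I would first recall the standard meta-fact that many-one reductions compose and that both decidability and its negation are preserved under such reductions: if $P$ many-one reduces to $Q$ and $P$ is undecidable, then $Q$ is undecidable. Given this, the corollary is a one-line consequence of the undecidability of simple semi-unification (stated as the corollary immediately following Lemma~\ref{lem:cssm-ssu}) together with Lemma~\ref{lem:ssu-su2}.

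For a more self-contained presentation, I would instead compose the entire chain of reductions built up in the paper. Starting from the undecidability of the Turing machine halting problem, apply Lemma~\ref{lem:TM_to_CM2} to get a many-one reduction to two-counter machine halting, then Lemma~\ref{lem:mm-to-cm} to one-counter machine $1$-halting, then Lemma~\ref{lem:cm1-dlsm} to deterministic, length-preserving two-stack machine uniform boundedness, then Lemma~\ref{lem:dlsm-cssm} to confluent, simple two-stack machine uniform boundedness, then Lemma~\ref{lem:cssm-ssu} to simple semi-unification, and finally Lemma~\ref{lem:ssu-su2} to right-uniform, two-inequality semi-unification. Since composition of computable functions is computable and the biconditional characterizations compose as well, this yields a single many-one reduction from Turing machine halting to Problem~\ref{prb:su2}.

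Since the Turing machine halting problem is undecidable, the existence of such a many-one reduction immediately yields undecidability of right-uniform, two-inequality semi-unification; otherwise, by composing the decision procedure with the reduction function, we could decide Turing machine halting.

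There is no genuine obstacle here, as each link in the chain has already been established. The only care needed is to note that we are using each lemma strictly as a many-one reduction (rather than a Turing reduction), which is exactly what is claimed in each statement; this is precisely the point of the paper's emphasis on many-one reductions, and it guarantees that the composition remains a many-one reduction rather than merely a Turing reduction.
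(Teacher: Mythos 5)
Your proposal is correct and matches the paper's (implicit) argument exactly: the corollary follows by transporting undecidability of simple semi-unification along the many-one reduction of Lemma~\ref{lem:ssu-su2}, which is precisely how the paper derives each of its undecidability corollaries from the preceding reduction lemma and the undecidability established earlier in the chain. Your expanded, self-contained version is just an unfolding of the same chain and introduces nothing different in substance.
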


Akin to the right-uniform (identical right-hand sides) restriction, it is straightforward to formulate \emph{left-uniform} (identical left-hand sides), two-inequality semi-unification~\cite[Problem~34]{Dud21}, and prove its undecidability.
This is used as a~starting point in a~recent, alternative undecidability proof of System F typability~\cite[Lemma~40]{Dud21}.

\begin{problem}[Left-uniform, Two-inequality Semi-unification]
\label{prb:lu2su}
Given two inequalities $\sigma \leq \tau_0$ and $\sigma \leq \tau_1$ with identical left-hand sides, do there exist substitutions $\varphi, \psi_0, \psi_1$ such that 
$\psi_0(\varphi(\sigma)) = \varphi (\tau_0)$ and $\psi_1(\varphi(\sigma)) = \varphi (\tau_1)$?
\end{problem}

\begin{lemma}
\label{lem:2su-lu2su}
Right-uniform two-inequality semi-unification (Problem~\ref{prb:su2}) many-one reduces to left-uniform, two-inequality semi-unification (Problem~\ref{prb:lu2su}).
\end{lemma}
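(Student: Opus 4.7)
The plan is to exploit the $(\to)$ constructor to bundle the two right-uniform equations into pairs. Given an instance $\sigma_0 \leq \tau, \sigma_1 \leq \tau$ of right-uniform two-inequality semi-unification, pick two fresh variables $\gamma_0, \gamma_1 \in \bbV$ not occurring in $\sigma_0, \sigma_1, \tau$, and produce the left-uniform instance
\[\sigma \leq \tau_0, \qquad \sigma \leq \tau_1, \qquad \text{where } \sigma = \sigma_0 \to \sigma_1,\ \tau_0 = \tau \to \gamma_1,\ \tau_1 = \gamma_0 \to \tau.\]
This function is obviously computable; it remains to show logical equivalence of the two instances.

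For the forward direction, assume a solution $(\varphi, \psi_0, \psi_1)$ of the right-uniform instance, so $\psi_0(\varphi(\sigma_0)) = \varphi(\tau) = \psi_1(\varphi(\sigma_1))$. Extend $\varphi$ to $\varphi'$ by setting $\varphi'(\gamma_0) = \psi_1(\varphi(\sigma_0))$ and $\varphi'(\gamma_1) = \psi_0(\varphi(\sigma_1))$, leaving all other variables unchanged. Because $\gamma_0, \gamma_1$ do not occur in $\sigma_0, \sigma_1, \tau$, we have $\varphi'(\sigma_i) = \varphi(\sigma_i)$ and $\varphi'(\tau) = \varphi(\tau)$. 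Then
\[\psi_0(\varphi'(\sigma)) = \psi_0(\varphi(\sigma_0)) \to \psi_0(\varphi(\sigma_1)) = \varphi(\tau) \to \varphi'(\gamma_1) = \varphi'(\tau_0),\]
and analogously $\psi_1(\varphi'(\sigma)) = \varphi'(\gamma_0) \to \varphi(\tau) = \varphi'(\tau_1)$, so $(\varphi', \psi_0, \psi_1)$ solves the left-uniform instance.

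For the backward direction, assume $(\varphi', \psi_0', \psi_1')$ solves the left-uniform instance. Since substitutions commute with $(\to)$, the equality $\psi_0'(\varphi'(\sigma_0 \to \sigma_1)) = \varphi'(\tau \to \gamma_1)$ decomposes into $\psi_0'(\varphi'(\sigma_0)) = \varphi'(\tau)$ and $\psi_0'(\varphi'(\sigma_1)) = \varphi'(\gamma_1)$; similarly $\psi_1'(\varphi'(\sigma_1)) = \varphi'(\tau)$ and $\psi_1'(\varphi'(\sigma_0)) = \varphi'(\gamma_0)$. The first and third of these equations jointly witness $\psi_0'(\varphi'(\sigma_0)) = \varphi'(\tau) = \psi_1'(\varphi'(\sigma_1))$, so $(\varphi', \psi_0', \psi_1')$ solves the right-uniform instance.

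There is no real obstacle here beyond tracking freshness of $\gamma_0, \gamma_1$ and invoking the injectivity of the constructor $(\to)$ to split the equations. The only care required during mechanization is that $\varphi'$ is defined pointwise on variables and its lifting to terms commutes with the decomposition, which is immediate from \Cref{def:substitution}.
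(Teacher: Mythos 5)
Your proposal is correct and follows essentially the same route as the paper's proof: the same pairing $\sigma_0 \to \sigma_1 \leq \tau \to \gamma_1$ and $\sigma_0 \to \sigma_1 \leq \gamma_0 \to \tau$ with fresh variables, the same extension of $\varphi$ by $\varphi'(\gamma_0) = \psi_1(\varphi(\sigma_0))$ and $\varphi'(\gamma_1) = \psi_0(\varphi(\sigma_1))$, and the same componentwise decomposition for the converse. No gaps.
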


\begin{proof}
Given two right-uniform inequalities $\sigma_0 \leq \tau$ and $\sigma_1 \leq \tau$, let $\alpha_0, \alpha_1$ be fresh variables.
Construct the terms $\sigma' := \sigma_0 \to \sigma_1$, $\tau_0' := \tau \to \alpha_1$, and $\tau_1' := \alpha_0 \to \tau$.
We show that $\sigma_0 \leq \tau$ and $\sigma_1 \leq \tau$ are solvable iff the left-uniform inequalities $\sigma' \leq \tau_0'$ and $\sigma' \leq \tau_1'$ are solvable.

\begin{itemize}
\item Assume $\psi_0(\varphi(\sigma_0)) = \varphi (\tau)$ and $\psi_1(\varphi(\sigma_1)) = \varphi (\tau)$.
Construct the substitution~$\varphi'$ such that ${\varphi'(\alpha_0) = \psi_1(\varphi(\sigma_0))}$, ${\varphi'(\alpha_1) = \psi_0(\varphi(\sigma_1))}$, and otherwise $\varphi'(\alpha) = \varphi(\alpha)$.\\
We have ${\psi_0(\varphi'(\sigma')) = \varphi (\tau) \to \varphi'(\alpha_1) = \varphi' (\tau_0')}$ and
$\psi_1(\varphi'(\sigma')) = \varphi'(\alpha_0) \to \varphi (\tau) = \varphi' (\tau_1')$.
Therefore, $\varphi', \psi_0, \psi_1$ solve the left-uniform inequalities $\sigma' \leq \tau_0'$ and $\sigma' \leq \tau_1'$.
\item Assume ${\psi_0(\varphi(\underbrace{\sigma'}_{\sigma_0 \to \sigma_1})) = \varphi (\underbrace{\tau_0'}_{\tau \to \alpha_1})}$ and ${\psi_1(\varphi(\underbrace{\sigma'}_{\sigma_0 \to \sigma_1})) = \varphi (\underbrace{\tau_1'}_{\alpha_0 \to \tau})}$.
Immediately, we have ${\psi_0(\varphi(\sigma_0)) = \varphi (\tau)} = \psi_1(\varphi(\sigma_1))$.\qedhere
\end{itemize}
\end{proof}

\begin{corollary}
\label{lem:lu2su-undec}
Left-uniform, two-inequality semi-unification (Problem~\ref{prb:lu2su}) is undecidable.
\end{corollary}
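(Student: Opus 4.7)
The plan is to derive this corollary as the terminal link in the chain of many-one reductions assembled throughout the paper. The immediately preceding Lemma~\ref{lem:2su-lu2su} supplies a constructive many-one reduction from right-uniform, two-inequality semi-unification (Problem~\ref{prb:su2}) to the left-uniform variant (Problem~\ref{prb:lu2su}). Combined with the already-established undecidability of Problem~\ref{prb:su2} (the corollary stated just after Lemma~\ref{lem:ssu-su2}), we obtain undecidability of Problem~\ref{prb:lu2su} by the standard fact that many-one reducibility transfers undecidability.

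To make this fully explicit, I would simply trace the composite reduction. Starting from Turing machine halting, the chain Lemma~\ref{lem:TM_to_CM2} $\to$ Lemma~\ref{lem:mm-to-cm} $\to$ Lemma~\ref{lem:cm1-dlsm} $\to$ Lemma~\ref{lem:dlsm-cssm} $\to$ Lemma~\ref{lem:cssm-ssu} $\to$ Lemma~\ref{lem:ssu-su2} $\to$ Lemma~\ref{lem:2su-lu2su} composes to a many-one reduction from Turing machine halting to Problem~\ref{prb:lu2su}. Since many-one reductions compose and since Turing machine halting is undecidable, the conclusion is immediate.

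There is essentially no obstacle here: composition of many-one reductions is well-known, and the individual constructive reductions have been established in the preceding sections. The only subtlety worth mentioning is the implicit appeal to composition being itself constructive, which is unproblematic since each intermediate reduction function is a computable total function and composition of computable total functions is computable. Hence one could also phrase the corollary as the observation that Problem~\ref{prb:lu2su} is \textsf{RE}-hard under constructive many-one reductions, paralleling the main Theorem~\ref{thm:tm-semiu}.
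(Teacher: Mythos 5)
Your proposal is correct and matches the paper's (implicit) argument: the corollary follows immediately from Lemma~\ref{lem:2su-lu2su} together with the previously established undecidability of right-uniform, two-inequality semi-unification, since undecidability transfers along many-one reductions. The additional tracing of the full reduction chain from Turing machine halting is consistent with how the paper composes its reductions elsewhere, so there is nothing to object to.
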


\subsection{Main Result}
Finally, we compose the previously described reductions into a~comprehensive, constructive many-one reduction from Turing machine halting to semi-unification (Theorem~\ref{thm:tm-semiu}).
This constitutes the main result of the present work.

\begin{theorem}
\label{thm:tm-semiu}
Turing machine halting constructively many-one reduces to semi-unification (Problem~\ref{prb:semiu}).
\end{theorem}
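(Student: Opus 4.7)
The plan is to compose the chain of constructive many-one reductions established in Lemmas~\ref{lem:TM_to_CM2}, \ref{lem:mm-to-cm}, \ref{lem:cm1-dlsm}, \ref{lem:dlsm-cssm}, \ref{lem:cssm-ssu}, and \ref{lem:ssu-su2}. The endpoints line up exactly: Turing machine halting reduces to two-counter machine halting; two-counter machine halting reduces to one-counter machine $1$-halting; one-counter machine $1$-halting reduces to deterministic, length-preserving two-stack machine uniform boundedness; this reduces to the confluent, simple variant; which reduces to simple semi-unification; which in turn reduces to right-uniform, two-inequality semi-unification.

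The only remaining gap is to observe that right-uniform, two-inequality semi-unification (Problem~\ref{prb:su2}) is itself an instance of (general) semi-unification (Problem~\ref{prb:semiu}): given the two inequalities $\sigma_0 \leq \tau$ and $\sigma_1 \leq \tau$ produced by Lemma~\ref{lem:ssu-su2}, the singleton-free set $\calI = \{\sigma_0 \leq \tau,\ \sigma_1 \leq \tau\}$ is solvable in the sense of Problem~\ref{prb:semiu} iff the original right-uniform instance is solvable. This witness is plainly computable, and the equivalence holds by definition, giving a trivial many-one reduction from Problem~\ref{prb:su2} to Problem~\ref{prb:semiu}.

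Finally I would invoke the fact that constructive many-one reductions compose: if $f$ reduces $P$ to $Q$ and $g$ reduces $Q$ to $R$, then $g \circ f$ reduces $P$ to $R$ via the computable composition, and the equivalence $P(x) \iff R(g(f(x)))$ follows by transitively chaining the two constructive biconditionals. Applying this to the seven reductions above yields the desired computable function from Turing machine instances to semi-unification instances, together with a constructive correctness proof.

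I do not expect any real obstacle here, since all the substantive work (in particular the delicate nested simulation of Lemma~\ref{lem:cm1-dlsm} and the adaptation of~\cite{Dudenhefner20-SU} to the confluent setting in Lemma~\ref{lem:cssm-ssu}) has been discharged in the preceding sections. The only conceptual point worth stating explicitly in the proof is that each intermediate reduction is computable and constructive in the precise sense fixed in the synopsis, so that composition preserves both properties; this is exactly what the axiom-free Coq mechanization certifies.
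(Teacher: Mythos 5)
Your proposal is correct and matches the paper's proof essentially verbatim: the paper also proves Theorem~\ref{thm:tm-semiu} by composing Lemmas~\ref{lem:TM_to_CM2}, \ref{lem:mm-to-cm}, \ref{lem:cm1-dlsm}, \ref{lem:dlsm-cssm}, \ref{lem:cssm-ssu}, and~\ref{lem:ssu-su2}, with the final trivial inclusion of right-uniform two-inequality semi-unification into general semi-unification appearing as the last link (\lstinline[mathescape]|RU2SemiU $\preceq$ SemiU|) of the mechanized chain in Section~\ref{sec:mec}. Your explicit remarks on the computability and transitivity of constructive many-one reductions are exactly what the paper delegates to the axiom-free Coq mechanization.
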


\begin{proof}
By composition of Lemmas~\ref{lem:TM_to_CM2},~\ref{lem:mm-to-cm},~\ref{lem:cm1-dlsm},~\ref{lem:dlsm-cssm},~\ref{lem:cssm-ssu}, and~\ref{lem:ssu-su2}.
Constructivity of the argument is witnessed by an axiom-free mechanization~(Section~\ref{sec:mec}) using the Coq proof assistant.
\end{proof}

Since semi-unification is recursively enumerable (Remark~\ref{rem:semiu-rec}), it is \textsf{RE}-complete under many-one reductions (in the sense of~\cite[Chapter 7.2]{Rogers}).

\begin{corollary}
\label{cor:su-mc}
Semi-unification (Problem~\ref{prb:semiu}) is \textsf{RE}-complete under many-one reductions.
\end{corollary}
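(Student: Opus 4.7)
The plan is straightforward: combine the constructive many-one reduction from Turing machine halting to semi-unification provided by Theorem~\ref{thm:tm-semiu} with recursive enumerability of semi-unification hinted at in Remark~\ref{rem:semiu-rec}. I need to establish both \textsf{RE}-hardness and \textsf{RE}-membership of Problem~\ref{prb:semiu} under many-one reductions.

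For \textsf{RE}-hardness, I invoke the standard fact (\cf~\cite[Chapter~7.2]{Rogers}) that the Turing machine halting problem is itself many-one complete for the class \textsf{RE}. Given any problem $P \in \textsf{RE}$, there is a computable many-one reduction $g$ from $P$ to Turing machine halting; composing $g$ with the reduction function $f$ supplied by Theorem~\ref{thm:tm-semiu} yields a computable many-one reduction $f \circ g$ from $P$ to semi-unification, using closure of computable functions under composition.

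For \textsf{RE}-membership, I would flesh out Remark~\ref{rem:semiu-rec}. Given a finite semi-unification instance $\calI = \{\sigma_i \leq \tau_i\}_{i=1}^n$, let $V \subseteq \bbV$ be the finite set of variables occurring in $\calI$. Any candidate solution is determined by a finite substitution $\varphi$ restricted to $V$ together with, for each inequality, a finite substitution $\psi_i$ on the variables occurring in $\varphi(\sigma_i)$. The set of finite substitutions with finite term values is computably enumerable, and for fixed $\varphi$ and $\psi_i$ the syntactic equality $\psi_i(\varphi(\sigma_i)) = \varphi(\tau_i)$ between first-order terms is decidable. Dovetailing across all finite candidates yields a semi-decision procedure for Problem~\ref{prb:semiu}.

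The main (though mild) obstacle is justifying the restriction to finite substitutions, namely that if an arbitrary solution exists, then it can be replaced by one of the finite form enumerated above. This is routine: variables outside $V$ do not appear in $\calI$, so their images under $\varphi$ are irrelevant, and for each $\psi_i$ only the (finitely many) variables occurring in $\varphi(\sigma_i)$ matter. With \textsf{RE}-hardness and \textsf{RE}-membership together, Corollary~\ref{cor:su-mc} follows immediately.
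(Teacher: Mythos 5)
Your proposal is correct and follows exactly the paper's route: \textsf{RE}-hardness via composition with the many-one reduction of Theorem~\ref{thm:tm-semiu} (Turing machine halting being many-one complete for \textsf{RE}), and \textsf{RE}-membership via the finite-substitution enumeration sketched in Remark~\ref{rem:semiu-rec}. The paper states the corollary with precisely these two ingredients and no further detail, so your elaboration of the dovetailing semi-decision procedure is just a more explicit rendering of the same argument.
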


\section{Mechanization}
\label{sec:mec}
 
This section provides an overview over the constructive mechanization, using the Coq proof assistant~\cite{Coq}, of the many-one reduction from Turing machine halting to semi-unification.
The mechanization relies on, and is integrated into the growing Coq Library of Undecidability Proofs~\cite{CLUP20}. 
The reduction is axiom-free and spans approximately $20000$ lines of code, of which $3500$ is contributed by the present work.

At the core of the library is the following mechanized notion of many-one reducibility\footnote{\href{https://github.com/uds-psl/coq-library-undecidability/blob/coq-8.17/theories/Synthetic/Definitions.v}{\lstinline|theories/Synthetic/Definitions.v|} in the github repository of~\cite{CLUP20}}

\begin{lstlisting}[mathescape]
Definition reduction {X Y} (f: X -> Y) (P: X -> Prop) (Q: Y -> Prop) :=
  forall x, P x <-> Q (f x).

Definition reduces {X Y} (P: X -> Prop) (Q: Y -> Prop) :=
  exists f: X -> Y, reduction f P Q.

Notation "P $\preceq$ Q" := (reduces P Q).
\end{lstlisting}
In the above, a~predicate \lstinline|P| over the domain~\lstinline|X| many-one reduces to a~predicate \lstinline|Q| over the domain \lstinline|Y|, 
denoted \lstinline[mathescape]|P $\preceq$ Q|, %[
if there exists a~function \lstinline|f: X -> Y| such that for all \lstinline|x| in the domain~\lstinline|X| we have \lstinline|P x| iff \lstinline|Q (f x)|.
Implemented in axiom-free Coq any such function \lstinline|f: X -> Y| is computable, a~necessity oftentimes handled with less rigor in traditional (non-mechanized) proofs.
Additionally, an axiom-free proof of \lstinline|P x <-> Q (f x)| cannot rely on principles such as functional extensionality, the fan theorem, or the law of excluded middle.

Overall, the statement 
\lstinline[mathescape]|Theorem reduction : HaltTM 1 $\preceq$ SemiU|\footnote{\href{https://github.com/uds-psl/coq-library-undecidability/blob/coq-8.17/theories/SemiUnification/Reductions/HaltTM_1_to_SemiU.v}{\lstinline|theories/SemiUnification/Reductions/HaltTM_1_to_SemiU.v|}} %[
faithfully mechanizes the overall formal goal of a~constructive many-one reduction from Turing machine halting to semi-unification (Theorem~\ref{thm:tm-semiu}).
Correctness of the argument is witnessed by verification in axiom-free Coq, for which constructivity is certified using the \lstinline|Print Assumptions| command~\cite{CoqDoc}.
In fact, the particular many-one reduction function could be extracted from the proof of \lstinline[mathescape]|HaltTM 1 $\preceq$ SemiU| %[
as a~$\lambda$-term (in the call-by-value $\lambda$-calculus model of computation) using existing techniques~\cite{ForsterItp19}.

\newpage

In detail, key contributions of the present work are consolidated as part of the following conjunction of many-one reductions\footnote{\href{https://github.com/uds-psl/coq-library-undecidability/blob/coq-8.17/theories/SemiUnification/Reductions/HaltTM_1_chain_SemiU.v}{\lstinline|theories/SemiUnification/Reductions/HaltTM_1_chain_SemiU.v|}}.
\begin{lstlisting}[mathescape]
Theorem HaltTM_1_chain_SemiU :
  HaltTM 1 $\preceq$ iPCPb /\
  iPCPb $\preceq$ Halt_BSM /\
  Halt_BSM $\preceq$ MM2_HALTING /\
  MM2_HALTING $\preceq$ CM1_HALT /\
  CM1_HALT $\preceq$ SMNdl_UB /\
  SMNdl_UB $\preceq$ CSSM_UB /\
  CSSM_UB $\preceq$ SSemiU /\
  SSemiU $\preceq$ RU2SemiU /\
  RU2SemiU $\preceq$ SemiU.
\end{lstlisting}
The individual problems in the above chain of many-one reductions are as follows.
\begin{itemize}
\item \lstinline|HaltTM 1| is one-tape Turing machine halting, and native to the library as an initial undecidable problem, building upon prior work~\cite{ForsterTM20,AspertiR15} in computability theory;
\item \lstinline|iPCPb| is indexed, binary Post correspondence problem, mechanized in~\cite{ForsterPCP};
\item \lstinline|Halt_BSM| is binary stack machine halting, mechanized in~\cite{ForsterMM2};
\item \lstinline|MM2_HALTING| is two-counter machine halting (Problem~\ref{prb:cm2halt}), mechanized in~\cite{ForsterMM2};
\item \lstinline|CM1_HALT| is one-counter machine 1-halting (Problem~\ref{prb:cm1halt});
\item \lstinline|SMNdl_UB| is uniform boundedness of deterministic, length-preserving two-stack machines (Problem~\ref{prb:dlsmub});
\item \lstinline|CSSM_UB| is uniform boundedness of confluent, simple stack machines (Problem~\ref{prb:cssmub});
\item \lstinline|SSemiU| is simple semi-unification (Problem~\ref{prb:ssu}), mechanized in~\cite{Dudenhefner20-SU};
\item \lstinline|RU2SemiU| is right-uniform, two-inequality semi-unification (Problem~\ref{prb:su2});
\item \lstinline|SemiU| is semi-unification (Problem~\ref{prb:semiu}), mechanized in~\cite{Dudenhefner20-SU}.
\end{itemize}
In the remainder of this section we sketch the contributed mechanizations of \lstinline|CM1_HALT|, \lstinline|SMNdl_UB|, and \lstinline|CSSM_UB| together with corresponding many-one reductions.

\subsection{One-counter Machines}
A one-counter machine \lstinline|Cm1|\footnote{\href{https://github.com/uds-psl/coq-library-undecidability/blob/coq-8.17/theories/CounterMachines/CM1.v}{\lstinline|theories/CounterMachines/CM1.v|}} is mechanized as a~list of instructions \lstinline|Instruction|, which are pairs \lstinline|State * nat| of a~program index and a~counter modifier.
According to Definition~\ref{def:CM}, \lstinline|State| is set to \lstinline|nat|.
Since a~counter modifier is strictly positive, an instruction $(j, d)$ (Definition~\ref{def:CM}) is mechanized as the pair \lstinline|(j, d-1)|.
\begin{lstlisting}
Definition State := nat.

Record Config := mkConfig { state: State; value: nat }.

Definition Instruction := State * nat.

Definition Cm1 := list Instruction.
\end{lstlisting}
The function \lstinline|step| mechanizes the step function on configurations, which are records with two fields: \lstinline|state| (program index) and \lstinline|value| (counter value).
The function \lstinline|nth_error| retrieves the current instruction \lstinline|(p, n)|.
On failure, \lstinline|step| enters a~trivial loop, mechanized as \lstinline|halting|.

1-halting (Problem~\ref{prb:cm1halt}) is mechanized as \lstinline|CM1_HALT|, that is the existence of a~number of steps \lstinline|n| after which the iterated step function reaches a~halting configuration starting from the configuration \lstinline${| state := 0; value := 1 |}$.
The counter modifier of one-counter machines \lstinline|M| in \lstinline|CM1_HALT| is less than $4$, mechanized as \lstinline|Forall (fun '(_, n) => n < 4) M|.

\begin{lstlisting}
Definition step (M: Cm1) (x: Config) : Config :=
  match (value x), (nth_error M (state x)) with
  | 0, _ => x (* halting configuration *)
  | _, None => x (* halting configuration *)
  | _, Some (p, n) => 
      match modulo (value x) (n+1) with
      | 0 => {| state := p; value := ((value x) * (n+2)) / (n+1) |}
      | _ => {| state := 1 + state x; value := value x |}
      end
  end.

Definition halting (M: Cm1) (x: Config) := step M x = x.

Definition CM1_HALT : {M: Cm1 | Forall (fun '(_, n) => n < 4) M} -> Prop :=
  fun '(exist _ M _) => 
    exists n, halting M (Nat.iter n (step M) {| state := 0; value := 1 |}).
\end{lstlisting}

The proof of \lstinline[mathescape]|MM2_HALTING $\preceq$ CM1_HALT|\footnote{\href{https://github.com/uds-psl/coq-library-undecidability/blob/coq-8.17/theories/CounterMachines/Reductions/MM2_HALTING_to_CM1_HALT.v}{\lstinline|theories/CounterMachines/Reductions/MM2_HALTING_to_CM1_HALT.v|}} %[
(Lemma~\ref{lem:mm-to-cm}) is by straightforward simulation and spans in total approximately 400~lines of code.

\subsection{Deterministic, Length-preserving Two-stack Machines} A two-stack machine \lstinline|SMN|\footnote{\href{https://github.com/uds-psl/coq-library-undecidability/blob/coq-8.17/theories/StackMachines/SMN.v}{\lstinline|theories/StackMachines/SMN.v|}} is mechanized as a~list of instructions \lstinline|Instruction| containing transition information.
The step relation \lstinline|step| on configurations \lstinline|Config| holds for instructions \lstinline|((r, s, x), (r', s', y))| such that \lstinline|x| is the current state, \lstinline|y| is the next state, \lstinline|r|, \lstinline|s| are the prefixes of the current respective left and right stacks, and \lstinline|r'|, \lstinline|s'| are the prefixes of the next respective left and right stacks.
The state space \lstinline|State| is set to \lstinline|nat| (we could have chosen any effectively enumerable type with decidable equality).
Configuration reachability \lstinline|reachable| is mechanized as the reflexive, transitive closure of \lstinline|step|.

\begin{lstlisting}
Definition State := nat.

Definition Symbol := bool.

Definition Stack := list Symbol.

Definition Config := Stack * Stack * State. 

Definition Instruction := Config * Config.

Definition SMN := list Instruction. 

Inductive step (M: SMN) : Config -> Config -> Prop :=
  | transition (v w r s r' s': Stack) (x y: State) : 
    In ((r, s, x), (r', s', y)) M -> 
    step M (r ++ v, s ++ w, x) (r' ++ v, s' ++ w, y).

Definition reachable (M: SMN) : Config -> Config -> Prop :=
  clos_refl_trans Config (step M).
\end{lstlisting}

A deterministic, length-preserving two-stack machine is mechanized as a~dependent pair
\lstinline${M : SMN | deterministic M /\ length_preserving M}$ of a~two-stack machine~\lstinline|M| and a~pair of (irrelevant) proofs showing that \lstinline|M| is deterministic and length-preserving.
Uniform boundedness (Problem~\ref{prb:dlsmub}) is mechanized as \lstinline|SMNdl_UB|, that is the existence of maximal length \lstinline|n| of exhaustive lists \lstinline|L| of reachable configurations \lstinline|Y| from any given configuration \lstinline|X|.

\begin{lstlisting}
Definition bounded (M: SMN) (n: nat) : Prop := 
  forall (X: Config), exists (L: list Config),
    (forall (Y: Config), reachable M X Y -> In Y L) /\ length L <= n.

Definition length_preserving (M: SMN) : Prop :=
  forall s t X s' t' Y, In ((s, t, X), (s', t', Y)) M ->
    length (s ++ t) = length (s' ++ t') /\ 1 <= length (s ++ t).

Definition SMNdl_UB :
  {M : SMN | deterministic M /\ length_preserving M} -> Prop :=
    fun '(exist _ M _) => exists (n: nat), bounded M n.
\end{lstlisting}

The proof of \lstinline[mathescape]|CM1_HALT $\preceq$ SMNdl_UB|\footnote{\href{https://github.com/uds-psl/coq-library-undecidability/blob/coq-8.17/theories/StackMachines/Reductions/CM1_HALT_to_SMNdl_UB.v}{\lstinline|theories/StackMachines/Reductions/CM1_HALT_to_SMNdl_UB.v|}} (Lemma~\ref{lem:cm1-dlsm}) %[
relies on a~variant of Hooper's argument~\cite{Hooper66}.
The particular mechanization details span approximately $2300$ lines of code, two thirds of which mechanize Hooper's argument\footnote{\href{https://github.com/uds-psl/coq-library-undecidability/blob/coq-8.17/theories/StackMachines/Reductions/CM1_HALT_to_SMNdl_UB/CM1_to_SMX.v}{\lstinline|theories/StackMachines/Reductions/CM1_HALT_to_SMNdl_UB/CM1_to_SMX.v|}}.
As a~side note, the mechanization does \emph{not} construct a~\enquote{mirror} machine~\cite[Part~IV]{Hooper66} for symbol search on the left stack (symmetric to symbol search on the right stack).
Instead, machine instructions may swap stacks.
This simplifies the construction and does not change uniform boundedness.
The particular machine \lstinline|M| consists of instructions simulating divisibility tests \lstinline|index_ops|, counter increase \lstinline|increase_ops|, stack traversal \lstinline|goto_ops|, and nested simulation initialization \lstinline|bound_ops|.
\begin{lstlisting}
Definition M : SMX :=
  locked index_ops ++ locked increase_ops ++
  locked goto_ops ++ locked bound_ops.
\end{lstlisting}

Since the proof structure for uniform bound verification is mostly by induction, extensive case analysis, and basic arithmetic, the mechanization benefits greatly from proof automation, i.e.\ Coq's \lstinline|lia|, \lstinline|nia|, and \lstinline|eauto| tactics~\cite{CoqDoc}.
For example, the proof automation tactic \lstinline|eauto with M| is used to solve machine specification correctness obligations.
It uses type-directed backwards search and relies on a~hint database of instruction specifications for the constructed machine \lstinline|M|.

To the best of the author's knowledge, the provided mechanization is the first that implements (a variant of) Hooper's argument.

\subsection{Confluent Two-stack Machines}
A confluent, simple two-stack machine \lstinline|cssm|\footnote{\href{https://github.com/uds-psl/coq-library-undecidability/blob/coq-8.17/theories/StackMachines/SSM.v}{\lstinline|theories/StackMachines/SSM.v|}} is mechanized as a~dependent pair \lstinline${M : ssm | confluent M}$ of a~simple two-stack machine~\lstinline|M| and an (irrelevant) proof that \lstinline|M| is confluent.
\begin{lstlisting}
Definition confluent (M: ssm) := forall (X Y1 Y2: config),
  reachable M X Y1 -> reachable M X Y2 -> 
    exists (Z: config), reachable M Y1 Z /\ reachable M Y2 Z.

Definition cssm := {M: ssm | confluent M}.
\end{lstlisting}

Uniform boundedness (Problem~\ref{prb:cssmub}) is mechanized as \lstinline|CSSM_UB|, that is the existence of maximal length \lstinline|n| of exhaustive lists \lstinline|L| of reachable configurations \lstinline|Y| from any given configuration \lstinline|X|.

\begin{lstlisting}
Definition bounded (M: ssm) (n: nat) := 
  forall (X: config), exists (L: list config),
    (forall (Y: config), reachable M X Y -> In Y L) /\ length L <= n.

Definition CSSM_UB (M: cssm) := exists (n: nat), bounded (proj1_sig M) n.
\end{lstlisting}

The proof of \lstinline[mathescape]|SMNdl_UB $\preceq$ CSSM_UB|\footnote{\href{https://github.com/uds-psl/coq-library-undecidability/blob/coq-8.17/theories/StackMachines/Reductions/SMNdl_UB_to_CSSM_UB.v}{\lstinline|theories/StackMachines/Reductions/SMNdl_UB_to_CSSM_UB.v|}} (Lemma~\ref{lem:cssm-ssu}) %[
is by revertible (confluent), local lookahead and spans approximately 800~lines of code.

The proof of \lstinline[mathescape]|CSSM_UB $\preceq$ SSemiU|\footnote{\href{https://github.com/uds-psl/coq-library-undecidability/blob/coq-8.17/theories/SemiUnification/Reductions/CSSM_UB_to_SSemiU.v}{\lstinline|theories/SemiUnification/Reductions/CSSM_UB_to_SSemiU.v|}} (Lemma~\ref{lem:cssm-ssu}) %[
is an almost verbatim copy of the corresponding proof of \lstinline[mathescape]|DSSM_UB $\preceq$ SSemiU| %[
from prior work~\cite[Section 5]{Dudenhefner20-SU}, in which determinism is replaced by confluence.
Most importantly, the key function $\zeta$~\cite[Definition~41]{Dudenhefner20-SU} is used to directly construct simple semi-unification solutions.

\section{Conclusion}
\label{sec:concl}
This work gives a~constructive many-one reduction from Turing machine halting to semi-unification (Theorem~\ref{thm:tm-semiu}).
It improves upon existing work~\cite{KTU93SemiU,Dudenhefner20-SU} regarding the following aspects.

First, previous approaches use Turing reductions to establish undecidability.
Therefore, such arguments are unable to establish \textsf{RE}-completeness under many-one reductions of semi-unification, shown in the present work (Corollary~\ref{cor:su-mc}).

Second, previous work relies on the undecidability of Turing machine immortality, which is not recursively enumerable, and obscures the overall picture.
In the present work, we avoid Turing machine immortality by adapting Hooper's ingenious construction~\cite{Hooper66} (also adapted in~\cite{KariO08}) to uniform boundedness (Lemma~\ref{lem:cm1-dlsm}).

Third, correctness of the reduction function is proven constructively (in the sense of axiom-free Coq), whereas previous work uses the principle of excluded middle, König's lemma~\cite{KTU93SemiU}, or the fan theorem~\cite{Dudenhefner20-SU}.
As a~result, anti-classical theories, such as synthetic computability theory~\cite{Bauer06}, may accommodate the presented results.

Fourth, computability of the many-one reduction function from Turing machines to semi-unification instances is established rigorously by its mechanization in the Coq proof assistant.
Traditionally, this aspect is treated less formally.

Fifth, the reduction is mechanized and contributed to the Coq Library of Undecidability Proofs~\cite{CLUP20}, building upon existing infrastructure.
Arguably, a~comprehensive mechanization is the \emph{only} feasible approach to verify a~reduction from Turing machine halting to semi-unification with high confidence in full detail.
The provided mechanization integrates existing work~\cite{Dudenhefner20-SU} into the library, and contributes a~first of its kind mechanized variant of Hooper's construction for uniformly bounded symbol search.

While this document provides a~high-level overview over the overall argument, surveyability (both local and global in the sense of~\cite{Bassler06}) is established mechanically.
Local surveyability is supported by the modular nature of the Coq Library of Undecidability Proofs.
That is, the mechanization of each reduction step can be understood and verified independently.
Global surveyability is supported by \lstinline|Theorem HaltTM_1_chain_SemiU| and the statement \lstinline[mathescape]|HaltTM 1 $\preceq$ SemiU| %[
(cf.\ Section~\ref{sec:mec}).
That is, the individually mechanized reduction steps do compose transitively.

The provided mechanization shows the maturity of the Coq proof assistant for machine-assisted verification of technically challenging proofs.
Neither Hooper's exact immortality construction~\cite{Hooper66} nor the exact semi-unification construction by Kfoury, Tiuryn, and Urzyczyn~\cite{KTU93SemiU} was mechanized.
Rather, the overall structure of the proof was revised to be mechanization-friendly.
For example, the simplicity and uniformity of one-counter machines as an intermediate model of computation serves exactly this purpose.

Already, building upon the present work, there is a~novel mechanization showing the undecidability of System F typability and type checking~\cite{Dud21}.
In addition, we envision further mechanized results.
For one, the undecidability of synchronous distributivity~\cite{AnantharamanELNR12} relies on uniform boundedness of semi-Thue systems that can be described as the presented (and mechanized) two-stack machines.
Further, since the underlying construction is already implemented, it is reasonable to mechanize a~many-one reduction from Turing machine halting to Turing machine immortality.
This would pave the way for further mechanized results.
For example, the undecidability of the finite variant property~\cite[Section 7]{BouchardGLN13} as well as several tiling problems~\cite{Kari07} rely on (variants of) Turing machine immortality.

\bibliographystyle{alphaurl}
\bibliography{bibliography}

\end{document}